\documentclass[lettersize,journal]{IEEEtran}
\usepackage{stfloats}
\usepackage{amsmath,amsfonts}
\usepackage{algorithmic}
\usepackage{algorithm}
\usepackage{array}

\usepackage[caption=false,font=footnotesize,labelfont=rm,textfont=rm]{subfig}

\usepackage{textcomp}
\usepackage{stfloats}
\usepackage{url}

\usepackage[bookmarks=true, colorlinks=true, breaklinks=true]{hyperref}
\usepackage{empheq}

\usepackage{amsmath}
\usepackage{amsthm}

\newtheorem{prop}{Proposition}

\usepackage{color}
\usepackage{verbatim}
\usepackage{graphicx}
\usepackage{cite}
\usepackage{amssymb}
\graphicspath{{images/}}
\begin{document}
\title{Prior-Aided Iterative Channel Reconstruction with Optimized Frame Structure for DSE Mitigation in CP-OTFS-Based LEO Satellite Systems}
\author{Yiyan~Cheng,~Tiejun~Lv,~\IEEEmembership{Senior Member,~IEEE},~Yashuai~Cao,~Xuehan Wang,~and Mugen~Peng,~\IEEEmembership{Fellow,~IEEE}
\thanks{Manuscript received 01 July 2025; revised 07 February 2026; accepted 02 August 2026. This paper was supported in part by the National Natural Science Foundation of China under No. 62271068. (\emph{corresponding author: Tiejun Lv}.)}
\thanks{Y. Cheng, T. Lv and M. Peng are with the School of Information and Communication Engineering, Beijing University of Posts and Telecommunications (BUPT), Beijing 100876, China (e-mail: \{yiyancheng, lvtiejun, pmg\}@bupt.edu.cn).}
\thanks{Y. Cao is with the School of Artificial Intelligence, University of Science and Technology Beijing, Beijing 100083, China, and also with Beijing Key Laboratory of Intelligent and Communication Integration and Hebei Key Laboratory of Space-Air-Ground Intelligent Communication, University of Science and Technology Beijing, Beijing 100083, China (e-mail: caoys@ustb.edu.cn).}
\thanks{X. Wang is with the Department of Electronic Engineering, Tsinghua University, Beijing 100084, China, and also with the China Mobile Research Institute, Beijing 100053, China (e-mail: wangxuehan@chinamobile.com).}
}
\maketitle

\begin{abstract}
Orthogonal time frequency space (OTFS) modulation has emerged as a promising solution to mitigate the severe Doppler shift in low Earth orbit (LEO) satellite communications. However, the frequency-dependent Doppler shift induced by the high mobility of LEO satellites leads to the Doppler squint effect (DSE). 
This effect compromises the channel sparsity in the delay-Doppler (DD) domain, rendering existing channel estimation methods ineffective. To overcome this challenge, this paper proposes a DSE-resilient transmission scheme for cyclic prefix OTFS (CP-OTFS)-based LEO satellite systems.
Specifically, we analyze the input-output relationship of the CP-OTFS-based LEO satellite communication system and derive a DSE-aware representation of the satellite-terrestrial channel in the DD domain. 
To efficiently capture DSE-aware channel characteristics, we propose a novel OTFS frame structure that allows the energy distribution of the received signal to serve as prior information for channel estimation.
Meanwhile, this frame structure strategically allocates pilot symbols to achieve uniform energy distribution and reduce the peak-to-average power ratio (PAPR), while imposing a time-domain waveform continuity constraint to suppress out-of-band emission (OOBE) caused by rectangular pulses.
Based on the frame structure, we propose a prior-aided iterative channel reconstruction (PAICR) algorithm to mitigate the severe power leakage induced by DSE. The proposed algorithm iteratively extracts and removes dominant channel components using Doppler-domain received signal energy observations, with a convergence criterion ensuring reliable termination. Furthermore, a Cramér-Rao lower bound is derived to provide a theoretical benchmark for evaluating the algorithm’s performance.
Simulation results demonstrate that the proposed approach reduces power spectral density by at least 12 dB/Hz and PAPR by at least 3 dB over the compared benchmark schemes, and the proposed PAICR algorithm achieves noticeable the normalized mean square error improvements compared with existing methods.
\end{abstract}

\begin{IEEEkeywords}
LEO satellite, OTFS, Doppler squint effect, OOBE, PAPR.
\end{IEEEkeywords}

\section{Introduction}
\IEEEPARstart{L}{ow} Earth orbit (LEO) satellite communication has emerged as a pivotal enabler for next-generation wireless ecosystems, thanks to its inherent merits of ubiquitous connectivity and latency propagation~\cite{LEO_advantage1, LEO_advantage2, LEO_advantage3}. However, the high orbital velocity (typically operating at velocities exceeding 7.5~\text{km/s}~\cite{LEO_Speed}) of LEO satellites gives rise to severe Doppler spread, which fundamentally compromises the orthogonality among subcarriers in orthogonal frequency division multiplexing (OFDM) systems. This phenomenon causes substantial inter-carrier interference (ICI) and hence limits the applicability of OFDM in LEO satellite systems~\cite{Perf_degradation}. To address this issue, orthogonal time frequency space (OTFS) modulation has emerged as a promising solution. By transforming the time-varying channel into the delay-Doppler (DD) domain, OTFS effectively mitigates the high Doppler effect and improves communication performance in such dynamic scenarios~\cite{9508932}. 

Nevertheless, the majority of existing studies on OTFS-based LEO satellite systems are based on the traditional sparse multipath channel model in the DD domain.
In~\cite{8836636}, an iterative channel estimation and data detection scheme was proposed based on the approximate message passing (AMP) algorithm.
Furthermore, a vector AMP (VAMP) algorithm for handling high-dimensional sparse reconstruction problems was proposed in~\cite{8713501}.
However, the presence of fractional Doppler spreads the channel energy along the Doppler dimension, rendering the channel less strictly sparse.
In response to this issue, some recent studies have investigated the impact of fractional Doppler in OTFS systems, and corresponding estimation methods have been developed to recover fractional Doppler channels. In~\cite{9738478}, an off-grid channel estimation scheme was proposed based on the sparse Bayesian learning (SBL) framework. To alleviate the associated computational burden, a fast Bayesian compressive sensing (FBCS) algorithm was further proposed in~\cite{10506450}.

In reality, as reported in~\cite{9398858}, significant Doppler differences among subcarriers exist since the Doppler shift induced by high mobility is correlated in the frequency domain. This critical phenomenon, referred to as the Doppler squint effect (DSE), has been generally neglected in the OTFS-based LEO satellite systems. Furthermore, the DD-domain channel model utilized in~\cite{8836636, 8713501, 9738478, 10506450} no longer holds true due to the power leakage caused by DSE. The resultant DSE-unaware approaches may degrade satellite-terrestrial channel estimation accuracy.

To handle the DSE, limited research efforts have been dedicated to OTFS-based channel estimation~\cite{10103827, 10475894, 10870146, 10970022, 10680151, 8701706, 7925924, 10464897}. For instance, the authors in~\cite{10103827} analyzed the impact of DSE on OTFS channel coefficients and developed an orthogonal matching pursuit (OMP)-based channel estimation approach. However, this method requires utilizing the entire OTFS frame for channel estimation, which leads to extremely low transmission efficiency. In~\cite{10475894}, a pilot scheme for channel estimation and data detection was proposed to mitigate interference between data and pilot symbols caused by the DD-domain expansion due to DSE. 
A deep learning-based DSE-resilient receiver for underwater OTFS communications was developed in~\cite{10870146}. Additionally, the authors in~\cite{10970022} proposed a Newton-based channel estimation algorithm that accurately estimates fractional delays and Doppler shifts by leveraging DSE characteristics. Nevertheless, the existing studies in~\cite{10475894, 10870146, 10970022} relied on rectangular pulses, which introduces considerable out-of-band emission (OOBE) and consequently degrade the spectral efficiency of OTFS systems.
In~\cite{10680151}, the authors proposed mitigating DSE interference by applying an interference phase shift matrix. Although this approach enhances OTFS performance as the number of DD plane bins increases, it incurs a linearly increasing peak-to-average power ratio (PAPR) with respect to the number of Doppler bins in the DD plane, as reported in~\cite{8701706}.
Moreover, the authors in~\cite{7925924} proposed to implement OTFS system by incorporating pre-processing and post-processing modules into the OFDM system. In~\cite{10464897}, the authors derived the input-output relation for cyclic prefix (CP)-OFDM-based OTFS systems in the discrete time domain under DSE conditions. However, a detailed analysis of the corresponding input-output relation in the DD domain was not explicitly explored.

In OTFS-based LEO satellite communications, DSE-aware transmission studies have yet to address the challenges of high OOBE and elevated PAPR. Furthermore, channel estimation techniques that effectively account for DSE characteristics remain insufficient.
Motivated by this, we first investigate the impact of DSE on the satellite-terrestrial channel modeling in the DD domain and derive a DSE-aware channel representation.
Then, we propose a novel frame structure that
utilizes the energy distribution of the received signal as prior knowledge to identify dominant DSE-aware channel components, while achieving low PAPR and OOBE.
Subsequently, we develop an efficient prior-aided iterative channel reconstruction (PAICR) algorithm to accurately reconstruct the complete satellite-terrestrial channel. The contributions of the paper are summarized as follows:
\begin{itemize}
\item[$\bullet$] We analyze the input-output relationship of the CP-OTFS-based LEO satellite communication system to maintain compatibility with existing networks and devices which commonly employ OFDM systems. In this system, time-frequency modulation and demodulation are performed using CP-OFDM, with the CP serving as a guard interval. Additionally, we derive a representation for the satellite-terrestrial channel in the DD domain to reveal the significant channel spreading caused by DSE, which destroys the DD-domain channel sparsity.
\item[$\bullet$] We design a OTFS frame structure to render the energy distribution of the DSE-aware channel explicitly observable at the receiver, which can be exploited as prior information for channel estimation.
Furthermore, by concentrating pilot symbols in the zeroth time slot of the DD domain, we achieve uniform energy distribution in the time domain, significantly reducing PAPR. To accommodate simple and practical rectangular pulses, we impose a continuity constraint at the stitching points of the time-domain waveform. Leveraging the unique transformation properties of OTFS, we derive DD-domain pilot symbols that fulfill this continuity constraint, thereby effectively suppressing OOBE.
\item[$\bullet$] We propose a PAICR algorithm to accurately reconstruct channels with non-sparse characteristics. The proposed approach adopts an iterative dominant component extraction and elimination strategy, where the strongest channel path is successively identified and removed from the received signal, thereby enabling progressive extraction of the channel parameters. A pilot-aided Doppler-domain energy observation mechanism is further exploited to provide prior knowledge for identifying dominant channel components, and a physically interpretable convergence criterion based on the evolution of the Doppler-domain energy distribution is designed to ensure reliable termination of the iterative process. Moreover, a Cramér-Rao lower bound (CRLB) is derived for the DSE-aware satellite-terrestrial channel model, providing a theoretical performance lower bound on channel estimation and serving as a benchmark for assessing our estimation approach.
\end{itemize}

The remainder of the paper is organized as follows. Section~\ref{section:2} presents the CP-OTFS-based LEO satellite communication system model. 
Section~\ref{section:3} describes the OTFS frame structure, detailing the arrangement of both transmit and receive symbols.
Our proposed PAICR algorithm tailored for the CP-OTFS-based LEO satellite system is presented in Section~\ref{section:4}.
Numerical results are illustrated in Section~\ref{section:5} for the proposed approach. Finally, a summary is provided in Section~\ref{section:6}.

\textit{Notations}: Scalars, vectors, and matrices are denoted by $x$, $\mathbf {x}$, and $\mathbf {X}$, respectively; 
$\delta\left( \cdot \right)$ denotes the Dirac delta function;
the superscripts $(\cdot)^*$ and $(\cdot)^H$ denote the complex conjugate and Hermitian transpose, respectively; $\lceil x \rceil$ and $\lfloor x \rfloor$ denote the smallest integer not less than $x$ and the largest integer not greater than $x$, respectively;  ${\left ( \cdot \right)_{M}}$ denotes mod-$M$ and $\langle x \rangle_N$ denotes $\left(x + \left\lfloor \frac{N}{2} \right\rfloor\right)_N - \left\lfloor \frac{N}{2} \right\rfloor$;
$\displaystyle \lim_{t \to a^-} f(t)$ and $\displaystyle \lim_{t \to a^+} f(t)$ denote the left- and right-hand limits of $f(t)$ as $t$ approaches $a$ from below and above, respectively; $\mathbb{E}\{\cdot\}$ denotes the expectation operator; $\mathcal{X}$ denotes a set; $\| \mathbf{x} \|_2$ denotes the Euclidean norm; $\Re\{\cdot\}$ denotes the real part of a complex quantity; and $\mathrm{tr}(\cdot)$ denotes the trace of a matrix.

\section{System Model}\label{section:2}
In this section, we first review the basic concepts of the CP-OTFS-based LEO satellite transmitter. Next, we analyze and derive the satellite-terrestrial channel model by taking into account the effects of DSE. Finally, we describe the processing of the signal received by the terrestrial receiver and characterize the input-output relationship of the system.

\subsection{LEO Satellite Transmitter Configuration}
To ensure that each symbol in the OTFS frame experiences nearly identical channel conditions, i.e., each symbol undergoes a constant channel gain, it is essential to map the symbol from the DD domain to the time-frequency (TF) domain. Therefore, the LEO satellite adopts an OTFS transmitter that applies the inverse symplectic finite Fourier transform (ISFFT)~\cite{7925924} to transform $Z=MN$ symbols from $X^{\mathrm {DD}}[k,l]$ to $X^{\mathrm {TF}}[n,m]$. 
The TF domain signal $X^{\mathrm {TF}}[n,m]$ obtained through ISFFT is given by

\begin{align} 
X^{\mathrm {TF}}[n,m]=\frac {1}{\sqrt{NM}} \sum _{k=\left\lceil -\frac{N}{2} \right\rceil}^{\left\lceil \frac{N}{2} \right\rceil - 1} \sum _{l=0}^{M-1} X^{\mathrm {DD}}[k,l] e^{-j 2 \pi \left ({\frac{m l}{M}-\frac{n k}{N}}\right)},
\label{eq:X_TF}
\end{align}
where $k=\left\lceil -\frac{N}{2} \right\rceil, \left\lceil -\frac{N}{2} \right\rceil + 1, \ldots,\left\lceil \frac{N}{2} \right\rceil - 1$, $l=0,1,\ldots,M-1$. 
Assume that the OFDM modulator adopts an easy-to-implement classical rectangular transmit waveform~\cite{Rectangular_Pulse}, i.e., 
\begin{align} 
g_{\mathrm {tx}}(t) \triangleq 
\begin{cases}
\frac{1}{\sqrt {T}}, & 0 \leq t \leq T_{\mathrm {sym}} \\ 
0, & {~\text{otherwise }}
\end{cases},
\end{align}
where $T$ denotes the time duration of an OFDM symbol without CP, and $T_{\mathrm {sym}}$ denotes the time duration of a complete OFDM symbol.
With the rectangular waveform, transforming the discrete signal $X^{\mathrm {TF}}[n,m]$ into the continuous transmit signal $s(t)$ yields
\begin{equation} 
\begin{aligned}
&\hspace {-.5pc} 
s(t)=\sum \limits_{m=0}^{M-1} \sum \limits _{n=0}^{N-1} X^{\mathrm{TF}}[n,m] e^{j 2 \pi m \Delta f\left({t-T_{\mathrm {cp}}-n T_{\mathrm {sym}}}\right)} \\&\qquad\qquad\qquad \qquad \qquad \qquad \qquad\times g_{\mathrm {tx}}\left ({t-n T_{\mathrm {sym}}}\right),
\label{eq:s(t)}
\end{aligned}
\end{equation}
where $m = 0,1,\ldots,M-1$, $n = 0,1,\ldots,N-1$, $\Delta f$ is the carrier spacing, $T_{\mathrm {cp}}=\frac{M_{\mathrm {cp}}T}{M}$ is the time duration of the CP, and $M_{\mathrm {cp}}$ is the length of CP. Here, we have $T=\frac{MT_{\mathrm {sym}}}{M+M_{\mathrm {cp}}}$.

\subsection{Satellite-Terrestrial Channel Analysis}
According to~\cite{hlawatsch2011wireless}, the time-variant channel impulse response $h(t,\tau)$ between the satellite and terrestrial receiver is defined as

\begin{equation} 
h(t,\tau)=\sum_{i=1}^{P}\mathcal G_{i}e^{j2\pi\nu_{i}t}\delta
\left(\tau-\tau_i(t)\right), 
\end{equation}
where $P$ is the number of propagation paths; $\tau_i(t) = \tau_i- \frac{v_i}ct$ is time-varying delay of the $i$-th path; $\nu_{i}=\frac{v_i}cf_c$ is Doppler shift at the carrier frequency $f_c$ of the $i$-th path; and $\mathcal G_{i}$, $\tau _{i}$ and $v_i$ respectively denote the attenuation, propagation delay and velocity of the $i$-th path. 
Substituting the time-varying delay, the channel impulse response $h(t,\tau)$ is rewritten as
\begin{equation} 
h(t,\tau)=\sum _{i=1}^{P}\mathcal G_{i}e^{j2\pi\nu_{i}t}\delta\left(\tau-(\tau_i-\frac{v_i}ct)\right). 
\label{eq:h(t,tau)}
\end{equation}

Based on~\eqref{eq:h(t,tau)}, the baseband equivalent response channel in the TF domain can be expressed as
\begin{align} 
H(t,f)=\sum _{i=1}^{P}\mathcal G_{i}e^{-j2\pi \tau _{i}f}e^{j2\pi \frac{\nu _{i}}{f_{c}}(f_{c}+f)t}. 
\end{align}

In contrast to the commonly-used channel models in prior LEO studies~\cite{introduction_data_aided1, vs_DSE1, 10982442, li2023channel}, the Doppler shift $e^{j2\pi \frac{\nu _{i}}{f_{c}}(f_{c}+f)t}$ introduces time-frequency coupling. Moreover, the phase shift induced by DSE accumulates as $e^{j2\pi \frac{\nu_{i}}{f_{c}}ft}$ within each OTFS symbol, and this phase variation depends on both the velocity and frame duration. 
While it can be considered negligible when the velocity is sufficiently low or the frame duration is sufficiently short, it becomes significant in LEO satellite systems. 
For instance, in a CP-OTFS-based LEO satellite system with typical parameters ($M=256$, $N=128$, and $v = 7562.2~\text{m/s}$), DSE produces a maximum offset of approximately $e^{j2\pi \frac{\nu _{i}}{f_{c}}\times M\Delta f\times NT}\approx e^{j1.65\pi}$. Such a considerable phase shift cannot be ignored and must be properly accounted for in system design.

\begin{figure*}[hb]
\centering
\hrulefill
\fontsize{9}{11}\selectfont
\begin{align*}
\label{eq:H_DD_ki_li}
&H_{k_i,l_i}^{\mathrm{DD}}\left[{k^{\prime}, l^{\prime}}\right] =
\mathcal G_{i}
e^{j2\pi\frac{k_i\left(l_i+M_{\mathrm{cp}}\right)}{N\left(M+M_{\mathrm{cp}}\right)}}
e^{j\pi\left(N-1\right)\left(M-1\right)\frac{M+M_{\mathrm{cp}}}{2Mp_{i}}}
e^{j\pi\left(N-1\right)\frac{k_i-k^{\prime}}{N}}
e^{j\pi\left(M-1\right)\frac{l^{\prime}-l_{i}}{M}}
\frac{\sin\pi N\left(\frac{k_{i}-k^{\prime}}{N}+(M-1)\frac{M+M_{\mathrm{cp}}}{2M\eta_i}\right)}{\sin\pi\left(\frac{k_{i}-k^{\prime}}{N}+(M-1)\frac{M+M_{\mathrm{cp}}}{2M\eta_i}\right)}
\\& \hspace {4.3pc}
\times \frac{\sin \pi M\left(\frac{l^{\prime}-l_{i}}{M}+(N-1)\frac{M+M_{\mathrm{cp}}}{2M\eta_i}\right)}{\sin\pi\left(\frac{l^{\prime}-l_{i}}{M}+(N-1)\frac{M+M_{\mathrm{cp}}}{2M\eta_i}\right)}.
\tag{13}
\end{align*}
\end{figure*}

\begin{figure*}[hb]
\centering
\hrulefill
\fontsize{9}{11}\selectfont
\begin{align*}
\label{eq:H_DD}
\tag{15}
&H^{\mathrm{DD}}\left[{k^{\prime}, l^{\prime}}\right] = \sum_{i=1}^{P} \mathcal G_{i}
e^{j2\pi\frac{k_i\left(l_i+M_{\mathrm{cp}}\right)}{N\left(M+M_{\mathrm{cp}}\right)}}
e^{j\pi\left(N-1\right)\left(M-1\right)\frac{M+M_{\mathrm{cp}}}{2M\eta_i}}
e^{j\pi\left(N-1\right)\frac{k_i-k^{\prime}}{N}}
e^{j\pi\left(M-1\right)\frac{l^{\prime}-l_{i}}{M}}
\frac{\sin\pi N\left(\frac{k_{i}-k^{\prime}}{N}+(M-1)\frac{M+M_{\mathrm{cp}}}{2M\eta_i}\right)}{\sin\pi\left(\frac{k_{i}-k^{\prime}}{N}+(M-1)\frac{M+M_{\mathrm{cp}}}{2M\eta_i}\right)}
\\&\hspace {4.2pc}
\times \frac{\sin \pi M\left(\frac{l^{\prime}-l_{i}}{M}+(N-1)\frac{M+M_{\mathrm{cp}}}{2M\eta_i}\right)}{\sin\pi\left(\frac{l^{\prime}-l_{i}}{M}+(N-1)\frac{M+M_{\mathrm{cp}}}{2M\eta_i}\right)}.
\end{align*} 
\end{figure*}

\setcounter{equation}{6}
To characterize the channel in the DD domain, we apply the symplectic finite Fourier transform (SFFT) to the time-variant frequency channel $H(t,f)$. 
Through the SFFT, $H(t,f)$ can be converted into a time-independent channel response
\begin{align} 
h(\tau,\nu)=\sum _{i=1}^{P}\mathcal G_{i}\left|\eta_i\right|e^{j2\pi \eta_i(\tau -\tau _{i})(\nu -\nu _{i})}, 
\label{eq:h(tau,nu)} 
\end{align}
where $\eta_i = \frac{f_c}{\nu_i}$ is introduced for notational simplicity, $\nu _{i} = \frac {k_{i} + \widetilde{k}_i}{NT_{\mathrm {sym}}}$ is Doppler tap for the $i$-th path and $\tau _{i}=\frac {l_{i} }{M\Delta f}$ is the delay tap for the $i$-th path. $k_{i}$ and $l_i$ are integers and represent the indexes of Doppler and delay taps. The real number $\widetilde{k}_i$, whose value range is $[-\frac{1}{2}, \frac{1}{2})$ , is defined as the fractional Doppler. 
The fractional delay is not considered since the delay-axis resolution $\frac {1}{M\Delta f}$ is sufficiently fine to map each path delay to an integer delay tap in wide-band LEO satellite systems (up to 30 MHz in the S-band~\cite{3GPP})~\cite{Tse_Viswanath_2005}.
Assume that the DD-domain channel is sparse, which is typically modeled as $h(\tau,\nu)=\sum _{i=1}^{P}\mathcal G_{i} \delta (\tau -\tau _{i}) \delta (\nu -\nu _{i})$~\cite{10856399, 10666708, 10679987, 10752433, 10916513}. However, as shown in~\eqref{eq:h(tau,nu)}, DSE destroys the sparsity of the satellite-terrestrial channel in the DD domain.



\subsection{Terrestrial Receiver Processing}
The transmit signal $s(t)$ is given in~\eqref{eq:s(t)}.
After the transmission over the satellite-terrestrial channel, the received signal is shown below
\begin{equation} 
r(t)=\iint h(\tau,\nu)s(t-\tau)e^{j2\pi \nu t}d\tau d\nu + n(t), 
\end{equation}
where $n(t)$ denotes the additive Gaussian noise.

At the receiver, the matched filter computes the cross-ambiguity function between the received waveform $g_{\mathrm{rx}}$ and the received signal $r(t)$ as
\begin{equation} 
A_{g_{\mathrm {rx}}, r}(t, f) = \int g_{\mathrm {rx}}^{*}\left ({t^{\prime}-t}\right) r\left ({t^{\prime}}\right) e^{-j 2 \pi f\left ({t^{\prime}-T_{\mathrm {cp}}}\right)} d t^{\prime},
\label{eq:A}
\end{equation}
where $g_{\mathrm{rx}}$ is defined as
\begin{align*} 
g_{\mathrm {rx}}(t) \triangleq 
\begin{cases}
\frac{1}{\sqrt {T}}, & T_{\mathrm{cp}} \leq t \leq T_{\mathrm {sym}} \\ 
0, & {~\text {otherwise }}
\end{cases}.
\end{align*}

Based on~\eqref{eq:A}, the received symbols are sampled as
\begin{equation} 
Y^{\mathrm {TF}}[n, m]=\left.{A_{g_{\mathrm{rx}}, r}(t, f)}\right |_{t=n T_{\mathrm{sym}}, f=m \Delta f}.
\label{eq:Y_TF} 
\end{equation}

Finally, the SFFT is applied to $Y^{\mathrm{TF}}[n, m]$ to obtain the received signal samples in the DD domain, i.e.,
\begin{equation} 
Y^{\mathrm {DD}}[k,l]=\frac {1}{\sqrt {N M}} \sum _{n=0}^{N-1} \sum _{m=0}^{M-1} Y^{\mathrm {TF}}[n, m] e^{j 2 \pi \left ({\frac {m l}{M}-\frac {n k}{N}}\right)}.
\label{eq:Y_DD} 
\end{equation}

According to the signal model above, the input-output relationship of the CP-OTFS-based LEO satellite system is derived in the following proposition.

\begin{prop}
For the CP-OTFS-based LEO satellite system, the input-output relationship is described as
\begin{equation} 
\begin{aligned}
&\hspace {-.9pc} 
Y^{\mathrm {DD}}[k, l]=\frac {1}{N M} \sum_{i=1}^{P} \sum _{k^{\prime}=\left\lceil -\frac{N}{2} \right\rceil}^{\left\lceil \frac{N}{2} \right\rceil - 1} \sum _{l^{\prime}=0}^{M-1} \theta_{k_i}(l,l^{\prime}) H_{k_i,l_i}^{\mathrm {DD}}\left [{k^{\prime}, l^{\prime}}\right] \\
&\qquad\qquad\times X^{\mathrm {DD}}\left [{\left \langle{ k-k^{\prime}}\right \rangle _{N}, \left ({l-l^{\prime}}\right )_{M} } \right] + V^{\mathrm {DD}}[k, l],
\label{eq:Y_DD1} 
\end{aligned}
\end{equation}
where $H_{k_i,l_i}^{\mathrm {DD}}\left [{k^{\prime},l^{\prime}}\right]$ is formulated as~\eqref{eq:H_DD_ki_li}, shown at the bottom of the page; $\theta_{k_i}(l,l^{\prime}) = e^{j 2 \pi \frac {k_i{\left ({l-l^{\prime}}\right )_{M}}}{N(M+M_{\mathrm{cp}})}}$ represents the phase rotation factor; and $V^{\mathrm {DD}}[k, l]$ is the additive noise in the DD domain.
\end{prop}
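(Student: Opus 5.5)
The plan is to push the transmit signal \eqref{eq:s(t)} through the channel \eqref{eq:h(t,tau)} directly in the time domain, pass it through the matched filter \eqref{eq:A} and the sampling \eqref{eq:Y_TF}, and only then apply the SFFT \eqref{eq:Y_DD}. Working with the time-varying delay form $\tau_i(t)=\tau_i-\frac{v_i}{c}t$ rather than with \eqref{eq:h(tau,nu)} avoids distributional integrals. For path $i$, the noiseless received component is
\begin{equation*}
r_i(t)=\mathcal G_i e^{j2\pi\nu_i t}\, s\!\left(t-\tau_i+\tfrac{t}{\eta_i}\right).
\end{equation*}
Substituting \eqref{eq:s(t)}, I will assume, as is standard for CP-OFDM, that $\tau_i\le T_{\mathrm{cp}}$ and that the time dilation $t/\eta_i$ is small compared with $T_{\mathrm{cp}}$. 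Under this assumption the receive window $[nT_{\mathrm{sym}}+T_{\mathrm{cp}},(n+1)T_{\mathrm{sym}}]$ only sees the $n$-th transmitted OFDM symbol, so there is no inter-symbol interference. This yields $Y^{\mathrm{TF}}[n,m]$ as a sum over the transmitted subcarriers $m'$ of $X^{\mathrm{TF}}[n,m']$ times an integral of a complex exponential over one symbol.

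Next I will evaluate that integral. The exponent contains the subcarrier mismatch $(m'-m)\Delta f$, the Doppler term $\nu_i$, and the squint term $m'\Delta f/\eta_i$, plus constant phases $e^{-j2\pi m'\Delta f\tau_i}$ and $e^{j2\pi\nu_i(nT_{\mathrm{sym}}+T_{\mathrm{cp}})}$. I will also keep the squint phase $e^{j2\pi m'\Delta f(nT_{\mathrm{sym}}+T_{\mathrm{cp}})/\eta_i}$, which couples $n$ and $m'$. Writing $\nu_i=(k_i+\widetilde k_i)/(NT_{\mathrm{sym}})$, $\tau_i=l_i/(M\Delta f)$ and $T_{\mathrm{sym}}=(M+M_{\mathrm{cp}})T/M$, the cross term $\nu_i\cdot nT_{\mathrm{sym}}$ together with the ISFFT phases produces $e^{j2\pi k_i(l_i+M_{\mathrm{cp}})/(N(M+M_{\mathrm{cp}}))}$. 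The delay-dependent part produces the factor $\theta_{k_i}(l,l')$. To get the discrete form, I will replace the within-symbol integral by its $M$-sample Riemann sum, which is consistent with the CP-OFDM discrete-time model.

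Then I will substitute \eqref{eq:X_TF} for $X^{\mathrm{TF}}[n,m']$ and apply \eqref{eq:Y_DD}. After exchanging the order of summation, the sums over $n$ and over $m$ (or $m'$) become geometric series. Each series has the form $\sum_{n=0}^{N-1}e^{j2\pi n x}=e^{j\pi(N-1)x}\frac{\sin \pi N x}{\sin \pi x}$, and similarly for $M$. Here $x$ collects $(k_i-k')/N$ plus the squint contribution. The squint terms $m'\,n\,T_{\mathrm{sym}}\Delta f/\eta_i$ are bilinear in $(n,m')$, so they do not factor exactly. I will linearize them around the frame centre, replacing $m'$ by its mean $(M-1)/2$ in the Doppler sum and $n$ by $(N-1)/2$ in the delay sum. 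This gives the offsets $(M-1)\frac{M+M_{\mathrm{cp}}}{2M\eta_i}$ and $(N-1)\frac{M+M_{\mathrm{cp}}}{2M\eta_i}$ and the common phase $e^{j\pi(N-1)(M-1)\frac{M+M_{\mathrm{cp}}}{2M\eta_i}}$ appearing in \eqref{eq:H_DD_ki_li}. The input index is then reindexed modulo $N$ and $M$, giving $X^{\mathrm{DD}}[\langle k-k'\rangle_N,(l-l')_M]$ with the $1/(NM)$ normalization. Finally, the noise passes through the same linear receiver operations and is collected into $V^{\mathrm{DD}}[k,l]$.

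I expect the main obstacle to be this bilinear squint term. The exact sum is a two-dimensional chirp-like sum that does not separate into the product of two Dirichlet kernels claimed in \eqref{eq:H_DD_ki_li}. I will first try to justify the centre-of-frame linearization by bounding the residual phase $2\pi(m'-\frac{M-1}{2})(n-\frac{N-1}{2})\frac{M+M_{\mathrm{cp}}}{M\eta_i}$ and showing it contributes only a second-order error. If that bound fails, the statement has to be read as the paper's approximate model, and the proof should say so explicitly. A secondary difficulty is keeping the delay-dependent phases straight so that they collapse exactly to $\theta_{k_i}(l,l')$.
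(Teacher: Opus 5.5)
Your plan follows the paper's appendix almost step for step. The CP confines each receive window to one transmitted symbol (step $(a)$ there, using $T_{\mathrm{cp}}>\tau$), and the window integral becomes an $M$-term sum. The ISFFT/SFFT then turn the sums over $n$ and over subcarriers into Dirichlet kernels, and the bilinear squint phase $e^{j2\pi m'n\frac{M+M_{\mathrm{cp}}}{M\eta_i}}$ is removed by a centre-of-frame substitution (step $(b)$).

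Starting from $r_i(t)=\mathcal G_i e^{j2\pi\nu_i t}s(t-\tau_i+t/\eta_i)$ instead of integrating the spreading function \eqref{eq:h(tau,nu)} against the cross-ambiguity function only changes the entry point, since \eqref{eq:h(tau,nu)} integrates back to exactly this time-varying delay. It spares you the delta-function evaluation behind \eqref{eq:preH}. The price is that you must drop by hand what the paper drops between \eqref{eq:preH} and the next display (the factor $e^{j2\pi m(M_{\mathrm{cp}}+l')/(M\eta_i)}$). That means the within-symbol dilation, and also the $T_{\mathrm{cp}}$ part of the squint phase you propose to keep: it would shift the delay-kernel argument by $\frac{M_{\mathrm{cp}}}{M\eta_i}$, which is absent from \eqref{eq:H_DD_ki_li}.

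Carry out the linearization as the first-order expansion $m'n\approx\frac{M-1}{2}n+\frac{N-1}{2}m'-\frac{(M-1)(N-1)}{4}$. This gives exactly the two offsets and the common phase $e^{j\pi(N-1)(M-1)\frac{M+M_{\mathrm{cp}}}{2M\eta_i}}$ of \eqref{eq:H_DD_ki_li}. Substituting the two means independently would double that phase.

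As for your bound, the paper does not attempt one, and no bound making the residual negligible holds in general. The neglected phase $2\pi(m'-\frac{M-1}{2})(n-\frac{N-1}{2})\frac{M+M_{\mathrm{cp}}}{M\eta_i}$ reaches a quarter of the total DSE phase at the frame corners. For the introduction's $M=256$, $N=128$ example that is about $0.4\pi$ or more. Your fallback is therefore exactly what the paper establishes: the ``$=$'' in the proposition is an approximate model.

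The step that will not go through as you describe it is the collapse of the delay-dependent phases to $e^{j2\pi\frac{k_i(l_i+M_{\mathrm{cp}})}{N(M+M_{\mathrm{cp}})}}\theta_{k_i}(l,l')$.

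First, that constant comes from $\nu_i(T_{\mathrm{cp}}+\tau_i)$, not from $\nu_i nT_{\mathrm{sym}}$. The latter only supplies the Doppler shift $e^{j2\pi nk_i/N}$ inside the $n$-sum.

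More importantly, in your route the Doppler phase $e^{j2\pi\nu_i t}$ is evaluated at the receive sample. Because the SFFT over $m$ inverts the demodulating DFT, that sample is the output delay index $l$. You will therefore obtain $e^{j2\pi\frac{k_i(M_{\mathrm{cp}}+l)}{N(M+M_{\mathrm{cp}})}}$, which depends on $l$ alone and sits outside the $l'$-sum. This matches the stated factor only for $l'=l_i$ and $l\ge l_i$. For $l<l_i$ (samples fed by the CP) the two differ by $e^{j2\pi\frac{k_iM}{N(M+M_{\mathrm{cp}})}}$, which is not negligible unless $|k_i|\ll N$.

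The paper reaches $\theta_{k_i}(l,l')$ because the last line of \eqref{Y_TF1} evaluates the Doppler phase at $\tau+\frac{M_{\mathrm{cp}}+l'}{M\Delta f}$, with $l'$ the transmitted delay index. In effect it ties the sample index to $l'+M_{\mathrm{cp}}$ and ignores the CP wrap-around, and only then reindexes $l'\to(l-l')_M$. You have to adopt that identification explicitly as a further approximation; careful bookkeeping will not make it exact.

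Finally, read \eqref{eq:A} with the per-symbol reference $e^{-j2\pi f(t'-t-T_{\mathrm{cp}})}$, as the appendix implicitly does. Taken literally, \eqref{eq:A} leaves a spurious factor $e^{-j2\pi mnM_{\mathrm{cp}}/M}$ in $Y^{\mathrm{TF}}[n,m]$ even for an ideal channel.
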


\begin{IEEEproof}
    See Appendix.
\end{IEEEproof}

\setcounter{equation}{13}

Notably, the phase rotation factor $\theta_{k_i}(l,l^{\prime})$ in~\eqref{eq:Y_DD1} poses a practical challenge for receivers, as its estimation involves inferring all Doppler taps (i.e., determining all indices $k_i$) for each propagation path. 
However, when the Doppler dimension $N$ is sufficiently large, the signal model in~\eqref{eq:Y_DD1} can be approximated as~\cite{Y_DD_N_INF2}
\begin{equation} 
\begin{aligned}
&\hspace {-.9pc} 
Y^{\mathrm {DD}}[k,l] \approx
\frac {1}{N M} \sum _{k^{\prime}=\left\lceil -\frac{N}{2} \right\rceil}^{\left\lceil \frac{N}{2} \right\rceil - 1} \sum _{l^{\prime }=0}^{M-1} \theta_{k^{\prime}}(l,l^{\prime}) H^{\mathrm {DD}}\left [{k^{\prime }, l^{\prime }}\right] \\
&\qquad\qquad\times X^{\mathrm {DD}}\left [{\left \langle{ k-k^{\prime }}\right \rangle _{N}, \left ({l-l^{\prime}}\right )_{M} } \right] + V^{\mathrm {DD}}[k, l],
\label{eq:Y_DD_inf}
\end{aligned}
\end{equation}
where $H^{\mathrm {DD}}\left [{k^{\prime }, l^{\prime }}\right]$ is formulated as~\eqref{eq:H_DD}, shown at the bottom of the page; the phase potation factor turns into $\theta_{k^{\prime}}(l,l^{\prime}) = e^{j 2 \pi \frac {k^{\prime}{\left ({l-l^{\prime}}\right )_{M}}}{N(M+M_{\mathrm{cp}})}}$.
We observe from~\eqref{eq:Y_DD_inf} that this mathematical operation yields an approximate simplified expression for the original phase rotation factor $\theta_{k_i}(l,l^{\prime})$, now denoted as $\theta_{k^{\prime}}(l,l^{\prime})$, with only minor performance degradation, as will be confirmed by simulation results presented later.

\section{OTFS Frame Structure Arrangement}\label{section:3}


In this section, we first investigate the pilot arrangement strategy that enables the received signal to intuitively reveal the energy distribution of the DSE-aware channel. Subsequently, we optimize the placement of pilot symbols to reduce the PAPR of the time-domain waveform. We then theoretically derive the structural conditions that pilot symbols must satisfy in the DD domain to ensure low OOBE. Finally, by integrating these three design principles, we propose a novel OTFS frame structure that facilitates efficient channel estimation.

\subsection{DSE-Aware Design}
Since DSE leads to severe power leakage of the channel in the DD domain, our goal is to visually capture the structural characteristics of the dominant channel energy from the received signal, thus providing strong initial information for subsequent channel estimation.
It is worth noting that pilot symbols are known to the receiver and can therefore be regarded as deterministic probes for revealing channel energy dispersion in the DD domain. Accordingly, this subsection focuses on the positions of pilot symbols in the OTFS frame, while the interference from data symbols is ignored. For analytical clarity, we assume a pilot-only OTFS frame, in which all non-pilot positions are set to zero. It can be expressed as
\setcounter{equation}{15}
\begin{equation}
X^{\mathrm{DD}}[k, l] = \sum_{(k_p^{(i)}, l_p^{(i)}) \in \mathcal{P}} \delta[k - k_p^{(i)}] \cdot \delta[l - l_p^{(i)}],
\label{eq:pilot_position}
\end{equation}
where \( \mathcal{P} \) denotes the set of pilot indices; $k_p^{(i)}$ and $l_p^{(i)}$ denote the positions of the pilots in the Doppler dimension and the delay dimension, respectively. 
In the subsequent frame design, pilot and data symbols are separated by guard intervals, and pilot symbols are typically transmitted at higher power than data symbols. Accordingly, the analytical assumption, which ignores data interference, serves as a reasonable abstraction rather than a restrictive condition.

By substituting~\eqref{eq:pilot_position} into~\eqref{eq:Y_DD_inf}, we obtain
\begin{align}
\hspace {-1pc} 
&Y^{\mathrm{DD}}[k, l] \nonumber\\
&\approx \frac{1}{NM} \sum_{(k_p^{(i)}, l_p^{(i)}) \in \mathcal{P}} \sum _{k^{\prime}=\left\lceil -\frac{N}{2} \right\rceil}^{\left\lceil \frac{N}{2} \right\rceil - 1} \sum_{l^{\prime}=0}^{M-1} 
\theta_{k^{\prime}}(l, l^{\prime}) H^{\mathrm{DD}}[k^{\prime}, l^{\prime}] \nonumber\\
&\quad \times \delta[\langle k - k^{\prime} - k_p^{(i)} \rangle_N] \cdot 
\delta[ ( l - l^{\prime} - l_p^{(i)} )_M] + V^{\mathrm{DD}}[k, l] \nonumber\\
&= \frac{1}{NM} \sum_{(k_p^{(i)}, l_p^{(i)}) \in \mathcal{P}} \theta_{\langle k - k_p^{(i)} \rangle_N}\left(l, ( l - l_p^{(i)} )_M\right) \nonumber \\
&\quad \times H^{\mathrm{DD}}\left[\langle k - k_p^{(i)} \rangle_N, ( l - l_p^{(i)} )_M \right] + V^{\mathrm{DD}}[k, l].
\label{eq:ydd_observation}
\end{align}

From~\eqref{eq:ydd_observation}, it can be observed that each pilot symbol located at \( (k_p^{(i)}, l_p^{(i)}) \in \mathcal{P} \) contributes a spatially shifted replica of the DSE-aware channel.
The overall observation $Y^{\mathrm{DD}}[k, l]$ is therefore determined by both the shift and the superposition effects induced by the pilot arrangement.
Since the DSE causes the channel to spread along both the Doppler and delay axes around strong paths~\cite{10103827}, obtaining dominant information in either dimension allows for inferring the approximate distribution region of the non-zero channel coefficients. If all pilot symbols lie on a single delay column or a single Doppler row, circular shifts are introduced along one dimension, while no mutual superposition occurs in that dimension among different pilot contributions. In this case, the dominant path locations remain distinguishable, and the reliability of subsequent path localization is largely preserved. In contrast, when pilot symbols are distributed over multiple rows and columns, multiple shifted replicas are superimposed, leading to energy overlap and structural ambiguity. This superposition effect reduces the reliability of dominant path identification and increases the complexity of subsequent iterative estimation and interference cancellation procedures.
Therefore, pilot symbols are preferably placed with either $k_p^{(i)}=0$ or $l_p^{(i)}=0$, which can be regarded as a special case of the single-row or single-column arrangement, ensuring a zero circular shift in at least one dimension, thereby enabling a more intuitive visualization of the channel energy distribution.

\subsection{PAPR-Reduced Design}
The PAPR of the time-domain continuous signal in ~\eqref{eq:s(t)} can be defined as
\begin{equation}
PAPR =\frac{\max\limits_{0\le t\le NT_\mathrm{sym}}\left|s\left(t\right)\right|^2}{\mathbb{E}\{|s(t)|^2\}}.
\end{equation}
As shown in~\cite{tellado2005multicarrier}, in order to facilitate analysis and optimization, the continuous-time waveform $s(t)$ is typically oversampled by a factor of at least 4, so that the PAPR of the resulting discrete-time signal can closely approximate that of the actual continuous-time waveform. Therefore, by reducing the PAPR of the oversampled discrete sequence, it is possible to lower the actual PAPR of the continuous-time waveform to some extent.
To derive the condition for the pilot to achieve reduced PAPR in the DD domain, we first sample \eqref{eq:s(t)} as
\begin{equation}
\begin{aligned}
&s(uT_\mathrm{s}) = \sum_{n=0}^{N-1}\sum_{m=0}^{M-1}X^\mathrm {TF}[n,m]e^{j2\pi \Delta f(uT_\mathrm{s}-T_\mathrm{cp}-nT_\mathrm{sym})}\\
&\qquad \qquad \times g_\mathrm{tx}(uT_\mathrm{s}-nT_\mathrm{sym}),
\label{eq:s_uT}
\end{aligned}
\end{equation}
where $u = r+z(M+M_\mathrm{cp})$, $r = 0, 1,\dots,M+M_\mathrm{cp}-1$, $z = 0,1,\dots N-1$, and $T_\mathrm{s}$ denotes the sampling period. Substituting~\eqref{eq:X_TF} into~\eqref{eq:s_uT} yields
\begin{equation}
\begin{aligned}
&\hspace {-2.5pc} s\left(\left(r+z\left(M+M_\mathrm{cp}\right)T_\mathrm{s}\right)\right) \\
&\hspace {-2.5pc} = \frac{1}{\sqrt{MN}}\sum_{n=0}^{N-1}\sum_{m=0}^{M-1} \sum _{k=\left\lceil -\frac{N}{2} \right\rceil}^{\left\lceil \frac{N}{2} \right\rceil - 1} \sum_{l=0}^{M-1}X^\mathrm {DD}[k,l]e^{j2\pi \frac{nk}{N}} \\ 
& \hspace {-1.5pc} \times e^{j2\pi \frac{m}{M}\left(r-M_\mathrm{cp}-l+\left(z-n\right)\left(M+M_\mathrm{cp}\right)\right)}\\
& \hspace {-1.5pc} \times g_\mathrm{tx}\left(\left(r+\left(z-n\right)\left(M+M_\mathrm{cp}\right)\right)T_\mathrm{s}\right).
\label{eq:s_discrete1}
\end{aligned}
\end{equation}

It can be observed from~\eqref{eq:s_discrete1} that when $z = n$, the sampled time instant lies within the support of the rectangular pulse. Therefore,~\eqref{eq:s_discrete1} can be rewritten as
\begin{align}
&s\left(\left(r+z\left(M+M_\mathrm{cp}\right)T_\mathrm{s}\right)\right) \nonumber\\
&= \frac{1}{\sqrt{MN}}\sum_{m=0}^{M-1} \sum _{k=\left\lceil -\frac{N}{2} \right\rceil}^{\left\lceil \frac{N}{2} \right\rceil - 1} \sum_{l=0}^{M-1}X^\mathrm {DD}[k,l]e^{j2\pi \frac{zk}{N}} e^{j2\pi \frac{m}{M}\left(r-M_\mathrm{cp}-l\right)} \nonumber\\
&= \sqrt{\frac{M}{N}} \sum _{k=\left\lceil -\frac{N}{2} \right\rceil}^{\left\lceil \frac{N}{2} \right\rceil - 1} X^{\mathrm{DD}}\left[k,\rho_1\right]e^{j2\pi \frac{zk}{N}},
\label{eq:s_discrete2}
\end{align}
where $\rho_1 = \langle r-M_{\mathrm{cp}} \rangle _M$. Then the PAPR can be calculated as~\cite{8701706}
\begin{equation}
PAPR =\frac{ \max\limits_{\rho_1,z}\left|s\left(\left(r+z\left(M+M_\mathrm{cp}\right)T_\mathrm{s}\right)\right)\right|^2}{P_{\text{avg}}},
\end{equation}
where $P_{\text{avg}} = \frac{1}{(M+M_{\mathrm{cp}})N}\sum\limits_{r=0}^{M+M_{\mathrm{cp}}-1}\sum\limits_{z=0}^{N-1}\mathbb{E}\{|s((r+$
$z(M+M_\mathrm{cp})T_\mathrm{s}))|^2\}$. The upper bound of PAPR can be calculated from the Parseval's theorem and the Cauchy-Schwarz inequality theorem as follows:
\begin{equation}
PAPR \le \frac{MN \max\limits_{k,\rho_1}\left|X^\mathrm {DD}[k,\rho_1]\right|^2}{M\sigma_x^2} = \frac{N \max\limits_{k,\rho_1}\left|X^\mathrm {DD}[k,\rho_1]\right|^2}{\sigma_x^2},
\label{PAPR1}
\end{equation}
where $\sigma_x^2 = \mathbb{E}\{|X^\mathrm {DD}[k,\rho_1]|^2\}$. Based on the analysis of DSE-aware pilot allocation, we choose to amplify the pilot power at $k=0$ by a factor of $\alpha_0$ for all $0 \le l \le M-1$. Therefore, the PAPR is expressed as
\begin{equation}
\begin{aligned}
&PAPR \\
&=\frac{\max \limits_{\rho_1,z}\left|\sqrt{\frac{M}{N}}\left(\alpha_0X^{\mathrm{DD}}\left[0,\rho_1\right]+\sum\limits_{\substack{k=\left\lceil -\frac{N}{2} \right\rceil \\ k \neq 0}}^{\left\lceil \frac{N}{2} \right\rceil - 1} X^{\mathrm{DD}}\left[k,\rho_1\right]e^{j2\pi \frac{zk}{N}}\right)\right|^2}{P_{\text{avg}}},
\label{PAPR'}
\end{aligned}
\end{equation}
with
\begin{align}
&P_{\text{avg}}=\frac{1}{(M+M_{\mathrm{cp}})N}\sum\limits_{r=0}^{M+M_{\mathrm{cp}}-1}\sum\limits_{z=0}^{N-1} \nonumber\\
& \mathbb{E}\left\{\left|\sqrt{\frac{M}{N}}\left(\alpha_0X^{\mathrm{DD}}\left[0,\rho_1\right]+\sum\limits_{\substack{k=\left\lceil -\frac{N}{2} \right\rceil \\ k \neq 0}}^{\left\lceil \frac{N}{2} \right\rceil - 1}X^{\mathrm{DD}}\left[k,\rho_1\right]e^{j2\pi \frac{zk}{N}}\right)\right|^2\right\}.
\label{Pavg'}
\end{align}
Substituting~\eqref{Pavg'} into~\eqref{PAPR'}, we get
\begin{align}
&PAPR \nonumber \\
&\approx \frac{\max\limits_{\rho_1,z}\left|\sqrt{\frac{M}{N}}\alpha_0X^{\mathrm{DD}}\left[0,\rho_1\right]\right|^2}{\frac{1}{(M+M_{\mathrm{cp}})N}\sum\limits_{r=0}^{M+M_{\mathrm{cp}}-1}\sum\limits_{z=0}^{N-1} \mathbb{E}\left\{\left|\sqrt{\frac{M}{N}}\alpha_0X^{\mathrm{DD}}\left[0,\rho_1\right]\right|^2\right\}} \nonumber \\
&\le \frac{\frac{M}{N}\alpha_0^2 \max\limits_{k,\rho_1}\left|X^\mathrm {DD}[k,\rho_1]\right|^2}{\frac{M}{N}\alpha_0^2\sigma_x^2} 
= \frac{\max \limits_{k,\rho_1}\left|X^\mathrm {DD}[k,\rho_1]\right|^2}{\sigma_x^2}.
\label{PAPR2}
\end{align}
From \eqref{PAPR2}, we conclude that the PAPR is reduced by a factor of $N$ compared to~\eqref{PAPR1}.

\begin{figure*}[t]
\centering
\subfloat[]{\label{fig:pilotT}
		\includegraphics[width=0.42\textwidth]{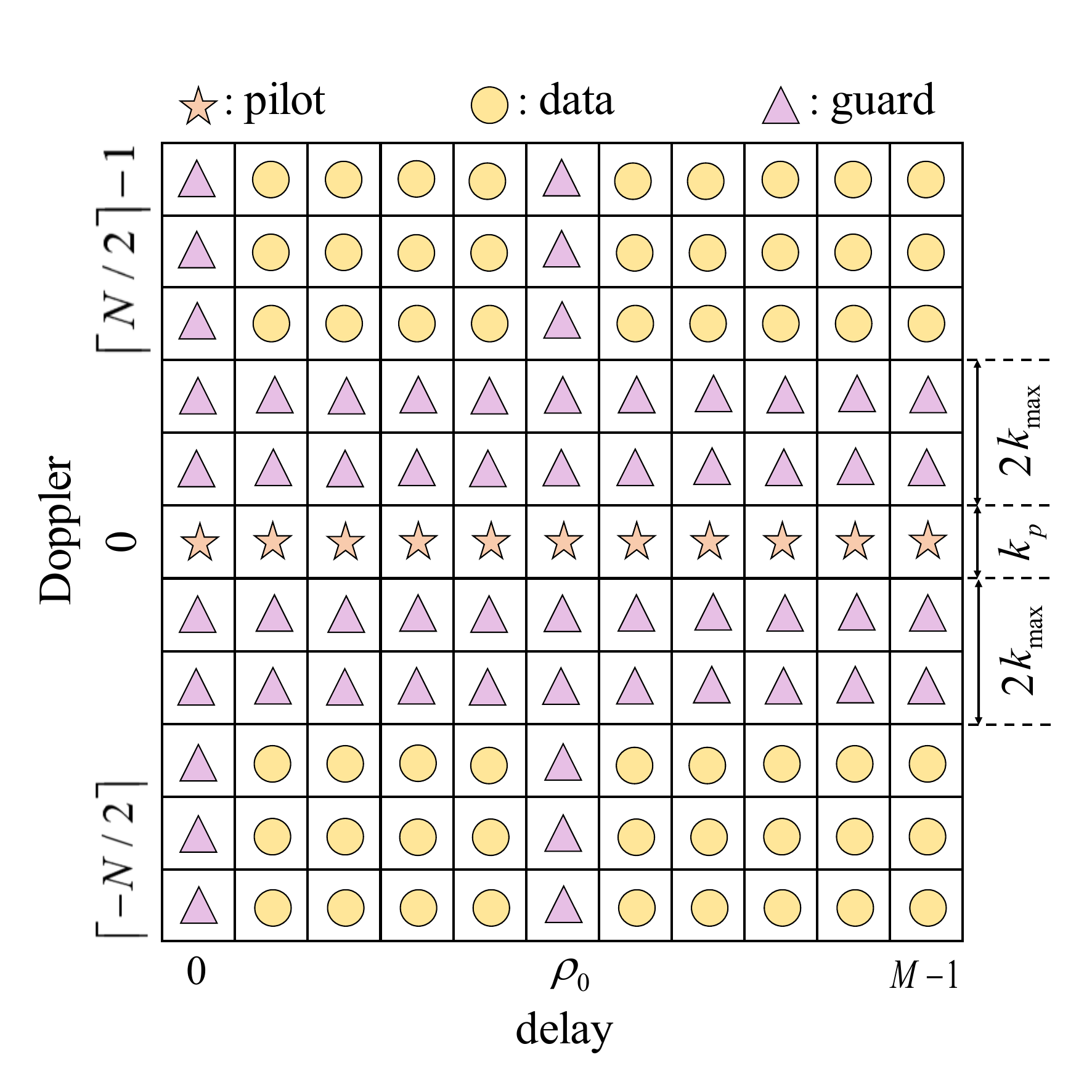}}
\hspace{7 mm}
\subfloat[]{\label{fig:pilotR}
		\includegraphics[width=0.42\textwidth]{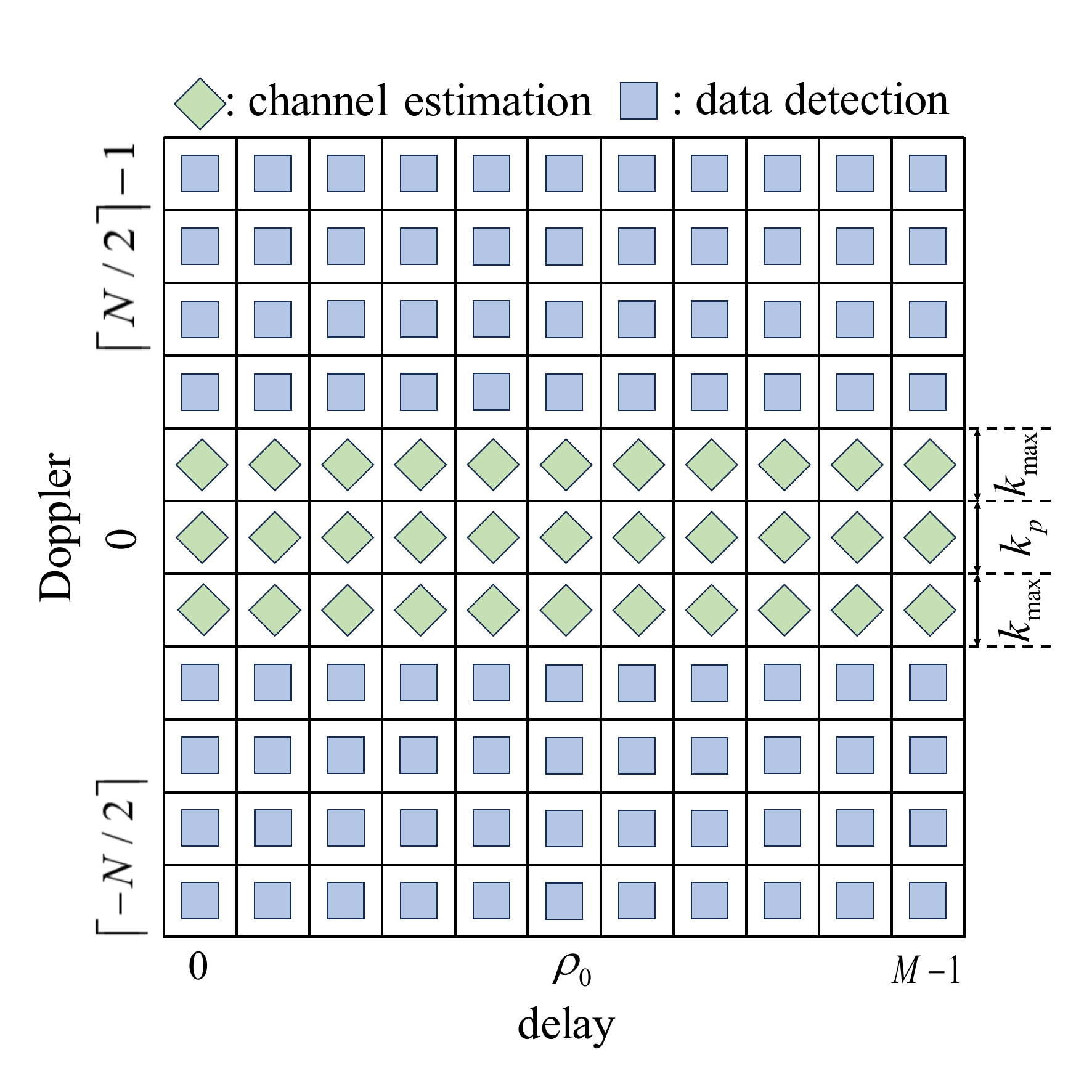}}
\caption{DD-domain symbol arrangement design. (a) Transmit symbol arrangement. (b) Receive symbol arrangement.}
\label{fig:pilot}
\end{figure*}

\subsection{OOBE-Reduced Design}
In practical communication systems, rectangular pulses are widely employed for their simplicity and operational feasibility~\cite{9920179}. However, their abrupt transitions at signal boundaries introduce discontinuities, which, according to Fourier theory, give rise to substantial high-frequency components and induce out-of-band spectral leakage. To mitigate this issue while preserving the rectangular pulse structure, we enforce continuity at the symbol boundaries. This approach effectively suppresses high-frequency spectral components and reduces OOBE.

To ensure the continuity of $s(t)$ at all times, particularly at the symbol boundary $t = nT_\mathrm{sym}$, we introduce the following continuity condition:
\begin{equation} 
\lim_{t \to nT_\mathrm{sym}^-} s(t) = \lim_{t \to nT_\mathrm{sym}^+} s(t).
\label{eq:continuity_cond}
\end{equation}
According to~\eqref{eq:s(t)}, the continuity condition in \eqref{eq:continuity_cond} can be rewritten as
\begin{align} 
&\sum_{m=0}^{M-1} X^\mathrm {TF}[n-1,m] e^{j2\pi m \Delta f(T_{\mathrm{sym}}-T_{\mathrm{cp}})} g_{\mathrm{tx}}(T_{\mathrm{sym}}) \nonumber \\
=& \sum_{m=0}^{M-1} X^\mathrm {TF}[n,m]e^{-j2\pi m\Delta f T_{\mathrm {cp}}} g_{\mathrm{tx}}(0).
\label{eq:cc_TF1}
\end{align}
Further,~\eqref{eq:cc_TF1} can be translated into
\begin{align}
&\sum_{m=0}^{M-1} X^\mathrm {TF}[n-1,m] 
= \sum_{m=0}^{M-1} X^\mathrm {TF}[n,m]e^{-j2\pi m\Delta f T_\mathrm {cp}}.
\label{eq:cc_TF}
\end{align}

Converting the continuity condition in~\eqref{eq:cc_TF} to the DD domain yields
\begin{equation}
\begin{aligned}
&\hspace {-0.5pc} 
\sum _{k=\left\lceil -\frac{N}{2} \right\rceil}^{\left\lceil \frac{N}{2} \right\rceil - 1}\sum_{l=0}^{N-1}\sum_{m=0}^{M-1}X^\mathrm {DD}[k,l]e^{j2\pi(n-1)\frac{k}{N}}e^{-j2\pi m\frac{l}{M}} \\
&=\sum _{k=\left\lceil -\frac{N}{2} \right\rceil}^{\left\lceil \frac{N}{2} \right\rceil - 1}\sum_{l=0}^{N-1}\sum_{m=0}^{M-1}X^\mathrm {DD}[k,l]e^{j2\pi n\frac{k}{N}}e^{-j2\pi m \left( \frac{l}{M} + \Delta f T_\mathrm{cp}\right )} \\
& \sum _{k=\left\lceil -\frac{N}{2} \right\rceil}^{\left\lceil \frac{N}{2} \right\rceil - 1}X^\mathrm {DD}[k,0]e^{j2\pi(n-1)\frac{k}{N}} = \sum _{k=\left\lceil -\frac{N}{2} \right\rceil}^{\left\lceil \frac{N}{2} \right\rceil - 1}X^\mathrm {DD}[k,\rho_0]e^{j2\pi n\frac{k}{N}},
\label{eq:cc_DD}
\end{aligned}
\end{equation}
where $\rho_0 = \langle -M_\mathrm{cp}\rangle_M$. Based on~\eqref{eq:cc_DD}, the pilot arrangement conditions that reduce OOBE are derived as
\begin{align} 
X^\mathrm {DD}[k,0] = X^\mathrm {DD}[k,\rho_0]e^{j2\pi \frac{k}{N}}, {\left\lceil -\frac{N}{2} \right\rceil} \le k \le {\left\lceil \frac{N}{2} \right\rceil - 1}. 
\label{eq:OOBE_condition}
\end{align}
Note that this condition holds for all $1 \le n \le N-1$, and it provides a sufficient condition for reducing OOBE. When $n=0$ is also included, it becomes a necessary and sufficient condition.

Building on the analyses in the preceding subsections on pilot, we now proceed to design the OTFS frame structure. Specifically, pilot symbols are placed at positions $[0,l]$ $(0 \le l \le M-1)$, with identical pilot symbols assigned to positions $[0,0]$ and $[0,\rho_0]$. To avoid interference between pilot and data symbols, the positions $[k,0]$ and $[k,\rho_0]$ $\left({\left\lceil -\frac{N}{2} \right\rceil} \le k \le {\left\lceil \frac{N}{2} \right\rceil - 1}, k \neq 0\right)$ are set to zero. This is because placing other pilot symbols that satisfy condition~\eqref{eq:OOBE_condition} at these positions would result in interference between pilot and data symbols. Furthermore, to enhance interference mitigation, a guard interval is introduced along the Doppler dimension, with a length equal to $2k_{\max}$~\cite{8671740}, where $k_{\max}$ denotes the Doppler taps corresponding to the largest Doppler.

As illustrated in Fig.~\ref{fig:pilot}\subref{fig:pilotT}, the symbols at the transmitter are arranged as
\begin{align} 
x[k,l] = {\begin{cases}
x_p[k,l], & k=0, 0 \leq l \leq M-1\\ 
0, & {\left\lceil -\frac{N}{2} \right\rceil} \le k \le {\left\lceil \frac{N}{2} \right\rceil - 1}, k \neq 0, l=0,\rho_0 \\
& \text{and}~k_p - 2k_{\max}\le k \le k_p + 2k_{\max}, \\
& 0 \leq l \leq M-1\\
x_d[k,l], & \text{otherwise.} 
\end{cases}} 
\label{eq:pilot_arrangement}
\end{align}
where $x_p[k,l]$ and $x_d[k,l]$ denote the pilot and data symbols; 
And $k_p$ is the positions of the pilots in the Doppler dimension.

At the receiver, the received symbols $y[k,l],k_p-k_{\max} \leq k \leq k_p+k_{\max},0 \leq l \leq M-1$ are used for channel estimation, while the remaining received symbols $y[k,l]$ are used for data detection, as shown in Fig.~\ref{fig:pilot}\subref{fig:pilotR}.

\section{Non-Sparse Satellite-Terrestrial Channel Reconstruction}\label{section:4}

In this section, we formulate the channel estimation problem for the CP-OTFS-based LEO satellite system.
To address this problem, we propose a PAICR algorithm, where dominant channel parameters are iteratively extracted from a coarse estimate using prior information derived from received signal energy observation, and the complete channel is subsequently reconstructed with convergence determined by an energy variation criterion.

\subsection{Problem Formulation}
With the considered frame structure arrangement scheme, the channel estimation task can be carried out by using the known symbols.

By utilizing the pilot and guard symbols in an OTFS frame,~\eqref{eq:Y_DD_inf} can be rewritten as 
\begin{equation} 
\begin{aligned}
&\hspace {-.9pc} Y^{\mathrm {DD}}[k, l]=\frac{1}{N M} \sum_{k^{\prime }=-k_{\max}}^{k_{\max}} \sum_{l^{\prime }=0}^{M-1}\theta_{k^{\prime}}(l,l^{\prime}) H^{\mathrm {DD}}\left[{k^{\prime }, l^{\prime }}\right] \\
&\qquad\qquad\times X^{\mathrm{DD}}\left[{\left \langle{ k-k^{\prime }}\right \rangle_{N}, \left ({l-l^{\prime}}\right )_{M}}\right] + V^{\mathrm{DD}}[k, l],
\label{eq:Y_DDP} 
\end{aligned}
\end{equation} 
To facilitate channel estimation, the system model in~\eqref{eq:Y_DDP} can be expressed in vectorized from as
\begin{equation} 
\mathbf {y}_{p}=\boldsymbol{\Phi }_{p}(\boldsymbol k, \boldsymbol l) \boldsymbol{\mathcal G}  +\mathbf {v}_{p},
\label{eq:hp_for_channel_estimation} 
\end{equation}
where vector $\mathbf {y}_{p} \in \mathbb{C}^{M_TN_T \times 1}$, $\boldsymbol{\mathcal G} \in \mathbb{C}^{P \times 1}$ and $\mathbf {v}_p \in \mathbb{C}^{M_TN_T \times 1}$ as the vectorized $\mathbf Y^{\mathrm {DD}}$, $\mathcal G_i$ and $\mathbf V^{\mathrm {DD}}$ with $M_T=M$ and $N_T=2k_{\max}+1$. The measurement matrix $\boldsymbol{\Phi }_{p}(\boldsymbol k, \boldsymbol l) \in \mathbb{C}^{M_TN_T \times P}$ can be expressed as
\begin{equation} 
\begin{aligned}
\boldsymbol{\Phi }_{p}(\boldsymbol k, \boldsymbol l)=\left[ \boldsymbol{\phi}_{p}(k_1,l_1), \boldsymbol{\phi}_{p}(k_2,l_2), \dots,\boldsymbol{\phi}_{p}(k_P,l_P)\right],
\label{eq:phi_p}
\end{aligned}
\end{equation}
where $\boldsymbol{\phi }_{p}(k_i,l_i)\in \mathbb{C}^{M_TN_T \times 1}$ and its $(Mk+l)$-th entry is given by
\begin{equation} 
\begin{aligned}
&\{\boldsymbol {\phi} _{p}(k_i,l_i)\}_{kM+l}=\frac{1}{NM} \sum_{k^{\prime }= -k_{\max}}^{k_{\max}} \sum_{l^{\prime }=0}^{M-1} \psi_i \left[k',l' \right] \\
&\qquad\qquad\qquad \times \theta_{k^{\prime}}(l,l^{\prime}) 
X^{\mathrm {DD}}\left [{\left \langle{ k-k^{\prime }}\right \rangle _{N}, \left ({l-l^{\prime}}\right )_{M}}\right]
\end{aligned}
\end{equation}
with
\begin{equation} 
\psi_i \left[{k^{\prime}, l^{\prime}}\right] = \frac{H_{k_i,l_i}^{\mathrm{DD}}\left[{k^{\prime}, l^{\prime}}\right]}{\mathcal G_{i}}.
\end{equation}


Owing to the presence of unknown delay and Doppler shifts, the formulation in~\eqref{eq:phi_p} remains nonlinear. To address this issue, a first-order Taylor expansion is applied to linearize the estimation problem. Let $\overline{\boldsymbol{k}}_g = \{\overline{k}_0, \overline{k}_1, \dots, \overline{k}_{N_{\nu}-1}\}$ denote a uniform sampling grid over the Doppler range $[-k_{\max}, k_{\max}]$ with a virtual Doppler resolution of $r_{\nu} = \frac{2k_{\max}}{N_{\nu}}$, where $N_{\nu}$ denotes the virtual grid size in the Doppler domain. Let $\overline{\boldsymbol{l}}_g = \{\overline{l}_0, \overline{l}_1, \dots, \overline{l}_{M_{\tau}-1}\}$ denote a uniform sampling grid over the delay range $[0, l_{\max}]$. Since fractional delays are not considered, $M_{\tau}=l_{\max}+1$ is defined as the virtual grid size in the delay domain, and the virtual delay resolution is set to $r_\tau = 1$. Based on the constructed virtual sampling grid, a first-order linear approximation can be performed as follows:
\begin{equation} 
\boldsymbol{\phi}_p({k}_i,l_i) \approx \boldsymbol{\phi}_p(\overline{k}_i,l_i)+\boldsymbol{\phi}_{p,\nu}'(\overline{k}_i,l_i)\kappa_i,
\end{equation} 
where $\overline{k}_i$ denotes the nearest grid point of ${k}_i$, and $\kappa_i = {k}_i-\overline{k}_i \in [-\frac{r_{\nu}}{2},\frac{r_{\nu}}{2}]$ represents the off-grid component. Vector $\boldsymbol{\phi}_{p,\nu}'(k_i,l_i)=\frac{\partial{\boldsymbol{\phi}_p(k_i,l_i)}}{\partial{k_i}} \in \mathbb{C}^{M_TN_T \times 1}$ denotes the first-order derivative of $\boldsymbol{\phi}_p(k_i,l_i)$ with respect to $k_i$. 
\begin{figure*}[ht]
\centering
\subfloat[]{\label{subfig:frac}
		\includegraphics[scale=0.4]{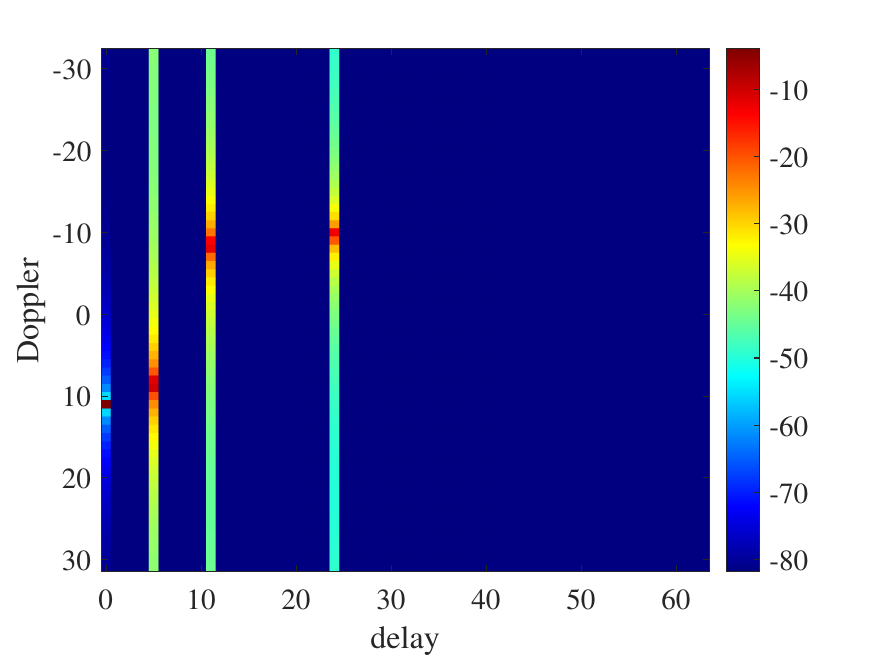}}
\subfloat[]{\label{subfig:DSE}
		\includegraphics[scale=0.4]{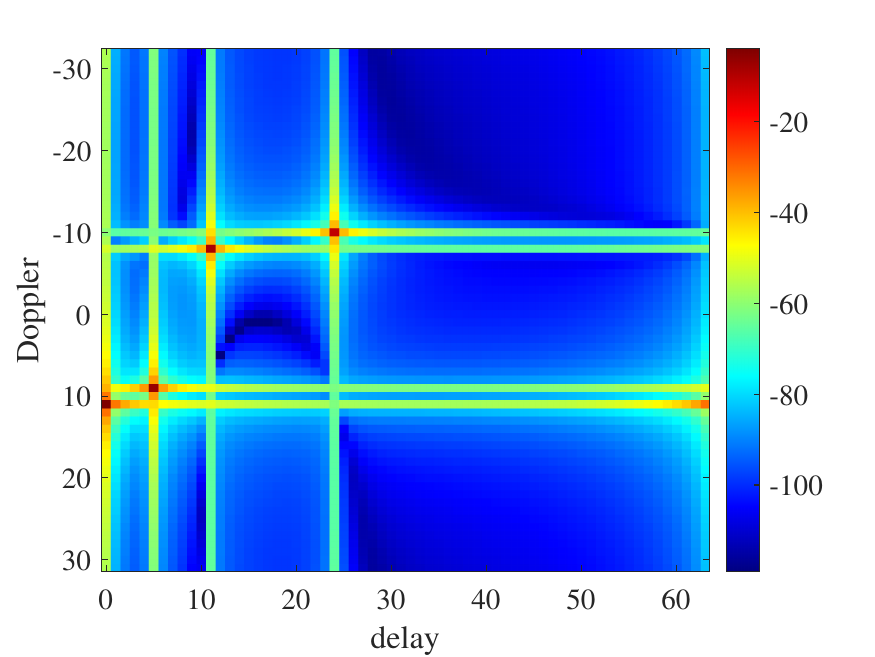}}
\subfloat[]{\label{subfig:frac_DSE}
		\includegraphics[scale=0.4]{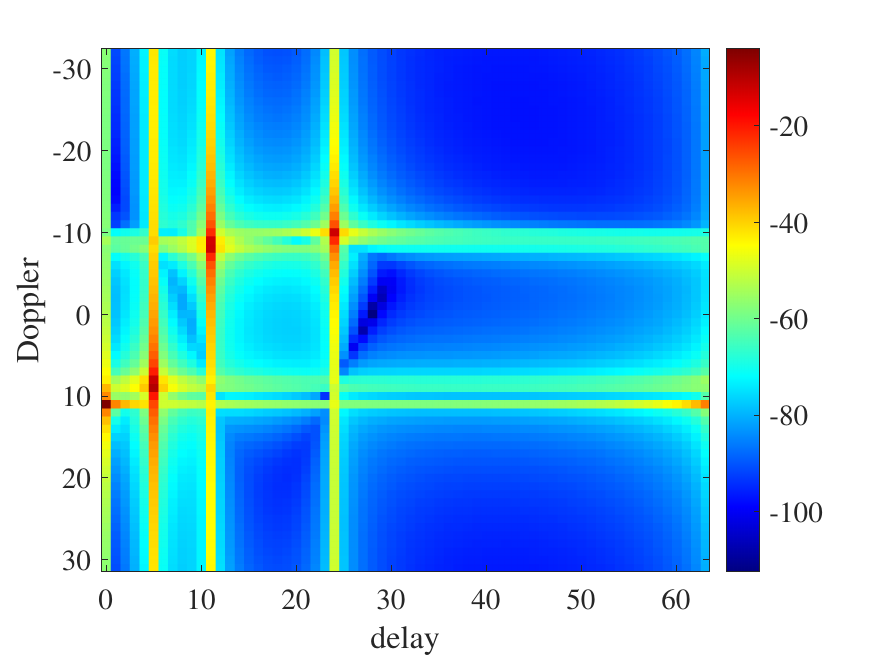}}
\caption{DD-domain channel amplitude (dB) under different conditions with $P=4$. (a) Presence of fractional Doppler. (b) Presence of DSE. (c) Presence of both fractional Doppler and DSE.}
\label{fig:H_DD_distributions}
\end{figure*}
The $(kM + l)$-th entry of the gradient vector
$\boldsymbol{\phi}_{p,\nu}'(k_i, l_i)$
is given by~\eqref{eq:derivative_k} at the bottom of next page,
\begin{figure*}[hb]
\centering
\hrulefill
\fontsize{9}{11}\selectfont 
\begin{equation} 
\begin{aligned}
&\boldsymbol{\phi}'_{p,\nu}(k_i, l_i) = \frac{1}{NM} \sum_{k^{\prime }= -k_{\max}}^{k_{\max}} \sum_{l^{\prime }=0}^{M-1}
j \pi \psi_i \left[{k^{\prime}, l^{\prime}}\right] \bigg( { a + \frac{2l_i}{N(M+M_\mathrm{cp})} } + \left(\frac{1}{N}+\frac{\Delta f (M-1)}{2f_{c}N} \right)  \mathcal{F}(b_i, N) + \frac{\Delta f (N-1)}{2f_{c}N}\mathcal{F}(c_i, M) \bigg) \\
&\qquad\qquad \times \theta_{k^{\prime}}(l,l^{\prime}) X^{\mathrm {DD}}\left [{\left \langle{ k-k^{\prime }}\right \rangle _{N}, \left ({l-l^{\prime}}\right )_{M}}\right].
\label{eq:derivative_k}
\end{aligned}
\end{equation}
\end{figure*}
where 
\begin{align*}
a &= \frac{2M_\mathrm{cp}}{N(M+M_\mathrm{cp})}
      + \frac{N-1}{N}
      + (N-1)(M-1)\frac{\Delta f}{2f_c N}, \\
b_i &= \frac{k_i \Delta f (M-1)}{2f_c N}
      - \frac{k' - k_i}{N}, \\
c_i &= \frac{k_i \Delta f (N-1)}{2f_c N}
      + \frac{l' - l_i}{M},
\end{align*}
and the function $\mathcal{F}(x_i, L)$ is defined as
\begin{equation}
\mathcal{F}(x_i, L) \triangleq
\frac{ \displaystyle\sum_{\ell=0}^{L-1} (2\ell - L + 1) e^{j\pi x_i (2\ell - L + 1)} }{ \displaystyle e^{-j\pi (L-1)x_i} \sum_{\ell=0}^{L-1} e^{j2\pi \ell x_i} }.
\end{equation}

Based on the virtual sampling grid, the satellite-terrestrial channel estimation problem in~\eqref{eq:hp_for_channel_estimation} can be reformulated as
\begin{equation}
\mathbf {y}_{p}=\overline{\boldsymbol{\Phi }}_{p}(\boldsymbol \kappa) \overline{\boldsymbol{\mathcal G}}  + \overline{\mathbf {v}}_{p},
\end{equation}
where $\overline{\boldsymbol{\Phi }}_{p} \in \mathbb{C}^{M_TN_T \times M_{\tau}N_{\nu}}$, $\overline{\boldsymbol{\mathcal G}}  \in \mathbb{C}^{M_{\tau}N_{\nu} \times 1}$, $\overline{\mathbf {v}}_{p} \in \mathbb{C}^{M_TN_T \times 1}$, and $\boldsymbol{\kappa} \in \mathbb{R}^{M_{\tau}N_{\nu} \times 1}$. The new measurement matrix $\overline{\boldsymbol{\Phi }}_{p}$ can be formulated as
\begin{align}
\overline{\boldsymbol{\Phi}}_p (\boldsymbol{\kappa}) = \boldsymbol{\Phi}_p + \boldsymbol{\Phi}_{p,\nu} \operatorname{diag}\big(\boldsymbol{\kappa} \big),
\end{align}
where 
\begin{align*}
\boldsymbol{\Phi}_p = &
\big[ \boldsymbol{\phi}_p(\overline k_0,0), \boldsymbol{\phi}_p(\overline k_1,0), \dots, \boldsymbol{\phi}_p(\overline k_{N_\nu-1},0),  \\&\boldsymbol{\phi}_p(\overline k_0,1), \boldsymbol{\phi}_p(\overline k_1,1),\dots, \boldsymbol{\phi}_p(\overline k_{N_\nu-1},1), \dots, \\&\boldsymbol{\phi}_p(\overline k_0,l_{\max}), \boldsymbol{\phi}_p(\overline k_1,l_{\max}),\dots, \boldsymbol{\phi}_p(\overline k_{N_\nu-1},l_{\max})\big],\\
\boldsymbol{\Phi}_{p,\nu}=&
\big[ \boldsymbol{\phi}_{p,\nu}'(\overline k_0,0), \boldsymbol{\phi}_{p,\nu}'(\overline k_1,0), \dots, \boldsymbol{\phi}_{p,\nu}'(\overline k_{N_\nu-1},0),  \\&\boldsymbol{\phi}_{p,\nu}'(\overline k_0,1), \boldsymbol{\phi}_{p,\nu}'(\overline k_1,1),\dots, \boldsymbol{\phi}_{p,\nu}'(\overline k_{N_\nu-1},1), \dots, \\&\boldsymbol{\phi}_{p,\nu}'(\overline k_0,l_{\max}), \boldsymbol{\phi}_{p,\nu}'(\overline k_1,l_{\max}),\dots, \boldsymbol{\phi}_{p,\nu}'(\overline k_{N_\nu-1},l_{\max})\big],
\end{align*} 
and $\boldsymbol{\kappa} = \big[ \kappa_0, \kappa_1, \dots, \kappa_{M_{\tau}N_{\nu}-1} \big]$.

\subsection{Prior-Aided Iterative Channel Reconstruction Scheme}
Fig.~\ref{fig:H_DD_distributions} illustrates the amplitude of the DD-domain channel under different channel conditions, highlighting the impact of fractional Doppler and DSE on energy dispersion.
As shown in Fig.~\ref{fig:H_DD_distributions}\subref{subfig:frac}, the presence of fractional Doppler causes the channel energy to spread from the dominant Doppler bin to its neighboring bins along the Doppler dimension. In Fig.~\ref{fig:H_DD_distributions}\subref{subfig:DSE}, DSE leads to additional energy dispersion in both the delay and Doppler dimensions. When fractional Doppler and DSE coexist, as illustrated in Fig.~\ref{fig:H_DD_distributions}\subref{subfig:frac_DSE}, the energy spreading effect is further aggravated, resulting in more severe inter-symbol interference (ISI). To mitigate ISI, we propose a PAICR algorithm, which consists of three main steps: channel coarse estimation, channel parameter extraction, and channel matrix recovery. The details of these steps are described in the following.

First, an initial channel estimate is obtained based on the pilot symbols and the received signals.
The DSE leads to energy leakage across multiple DD bins, thereby degrading the ideal sparsity of the channel. However, the channel energy remains concentrated in a limited region. SBL is therefore employed for coarse channel estimation, since it does not rely on strict sparsity assumptions and can provide robust estimates under approximately sparse conditions by automatically balancing sparsity promotion and noise suppression. The resulting estimate provides dominant path support and approximate parameter locations for the subsequent PAICR iterations, rather than being used as the final channel estimate.

Then, based on the channel parameters employed in the coarse estimation, i.e., 
$\boldsymbol{\hat{k}} = \overline{\boldsymbol{k}}_g + \boldsymbol{\kappa}$, 
$\boldsymbol{\hat{l}} = \overline{\boldsymbol{l}}_g$, and 
$\hat{\boldsymbol{\mathcal{G}}}$, we select the element 
$\hat{\mathcal{G}}_w$ of $\hat{\boldsymbol{\mathcal{G}}}$ with the largest magnitude, 
along with its corresponding Doppler and delay indices, denoted by 
$\hat{k}_w$ and $\hat{l}_w$. The resulting received signal is then expressed as
\begin{equation} 
\hat{\mathbf{y}}_p = \boldsymbol{\phi}_p(\hat{k}_w, \hat{l}_w) \hat{\mathcal{G}}_w,
\label{eq:y_hat} 
\end{equation}
where $w=1,2,\dots,W$ is the number of iterations. To ensure the accuracy of subsequent channel parameter extraction, the extracted dominant channel component should be eliminated, as given by
\begin{equation} 
\mathbf{y}_w = \mathbf{y}_{w-1} - \hat{\mathbf{y}}_p,
\label{eq:y_iteration_elimination} 
\end{equation}
where $\mathbf{y}_w$ denotes the received signal after eliminating the estimated dominant channel component during the $w$-th iteration.

According to the arrangement of the designed pilot symbols in~\eqref{eq:pilot_arrangement} and the DD-domain received signals in~\eqref{eq:Y_DDP}, we have
\begin{equation} 
\langle{ k-k^{\prime }} \rangle_{N} = 0,\;\text{i.e,}\;k=k^{\prime }.
\end{equation} 
This implies that prior knowledge of the stronger channel coefficients can be derived by examining the energy distribution in the Doppler dimension of the received signal. Specifically, the Doppler-domain energy distribution of the received signal for channel estimation can be expressed as
\begin{equation}
\mathbf E_w(k) = \frac{1}{M}{ \| \mathbf{Y}_w(k) \|_2},
\label{eq:doppler_energy} 
\end{equation}
where $\mathbf{Y}_w \in \mathbb{C}^{N_T \times M_T}$ is the matrixization of $\mathbf{y}_w$, and $k \in [-k_{\max}, k_{\max}]$ denotes the $k$-th row of $\mathbf{Y}_w$. Based on the Doppler-domain energy distribution of the received signal, the following criterion is defined to determine the convergence of the iterative process:
\begin{equation}
\epsilon_w = \left| \max_k \mathbf E_w(k) - \max_k \mathbf E_{w-1}(k) \right|.
\label{eq:convergence}
\end{equation}
The iteration is terminated when $\epsilon_w < 10^{-3}$. The satellite-terrestrial channel is then reconstructed based on the extracted channel parameters as
\begin{equation} 
\begin{aligned}
\label{eq:H_DD_hat}
&\hat H^{\mathrm{DD}}\left[{k^{\prime}, l^{\prime}}\right] = \sum_{w=1}^{W} \mathcal G_{w}
e^{j2\pi\frac{k_w\left(l_w+M_{\mathrm{cp}}\right)}{N\left(M+M_{\mathrm{cp}}\right)}}
e^{j\pi\left(N-1\right)\left(M-1\right)\frac{M+M_{\mathrm{cp}}}{2M\eta_{w}}}
\\&\hspace {2pc}
\times e^{j\pi\left(N-1\right)\frac{k_w-k^{\prime}}{N}}
\frac{\sin\pi N\left(\frac{k_{w}-k^{\prime}}{N}+(M-1)\frac{M+M_{\mathrm{cp}}}{2M\eta_{w}}\right)}{\sin\pi\left(\frac{k_{w}-k^{\prime}}{N}+(M-1)\frac{M+M_{\mathrm{cp}}}{2M\eta_{w}}\right)}
\\&\hspace {2pc}
\times e^{j\pi\left(M-1\right)\frac{l^{\prime}-l_{w}}{M}} \frac{\sin \pi M\left(\frac{l^{\prime}-l_{w}}{M}+(N-1)\frac{M+M_{\mathrm{cp}}}{2M\eta_{w}}\right)}{\sin\pi\left(\frac{l^{\prime}-l_{w}}{M}+(N-1)\frac{M+M_{\mathrm{cp}}}{2M\eta_{w}}\right)}.
\end{aligned}
\end{equation} 

\begin{figure*}[t]
\centering
\subfloat[]{\label{subfig:0iter}
		\includegraphics[scale=0.3]{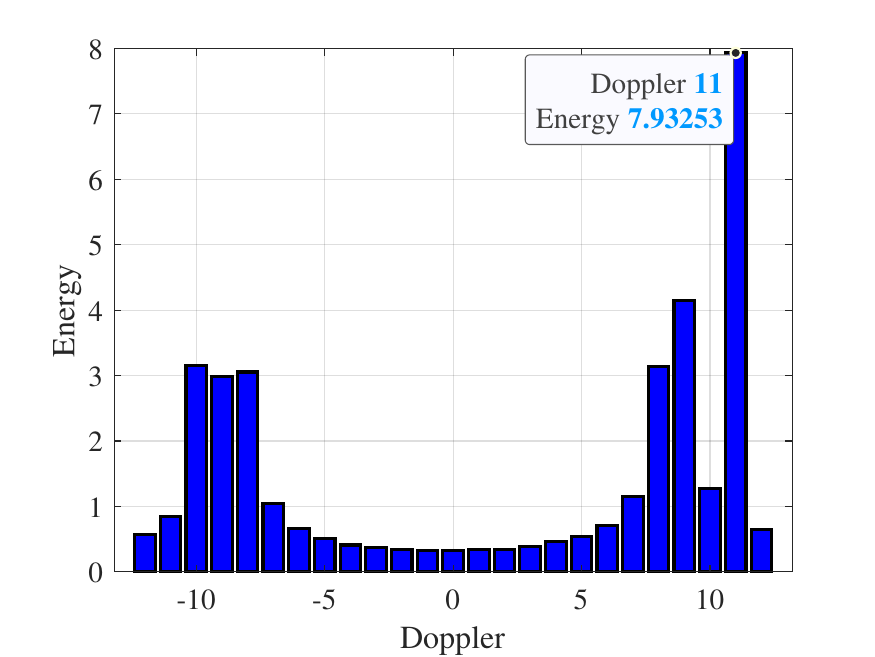}}
\subfloat[]{\label{subfig:1iter}
		\includegraphics[scale=0.3]{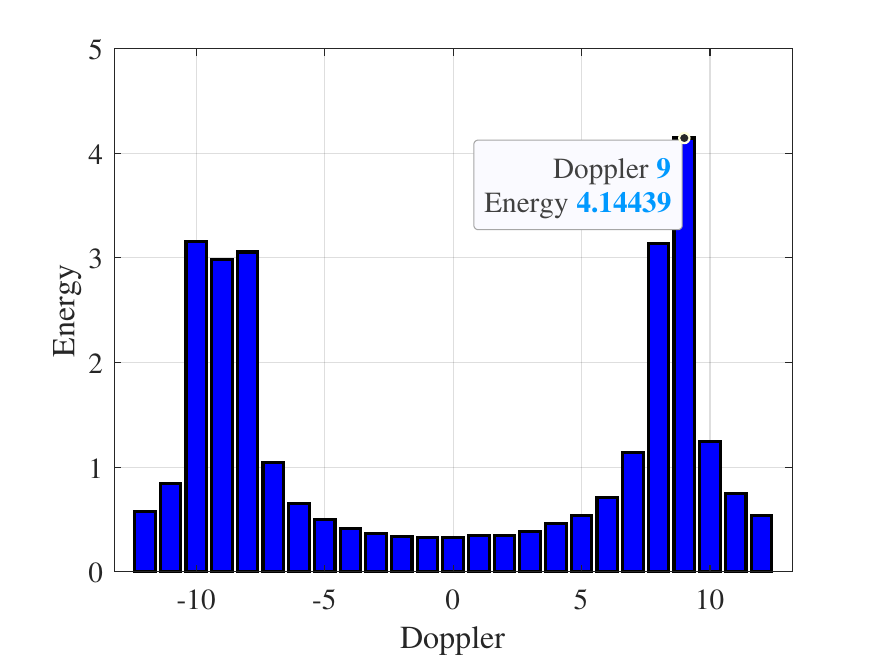}}
\subfloat[]{\label{subfig:2iter}
		\includegraphics[scale=0.3]{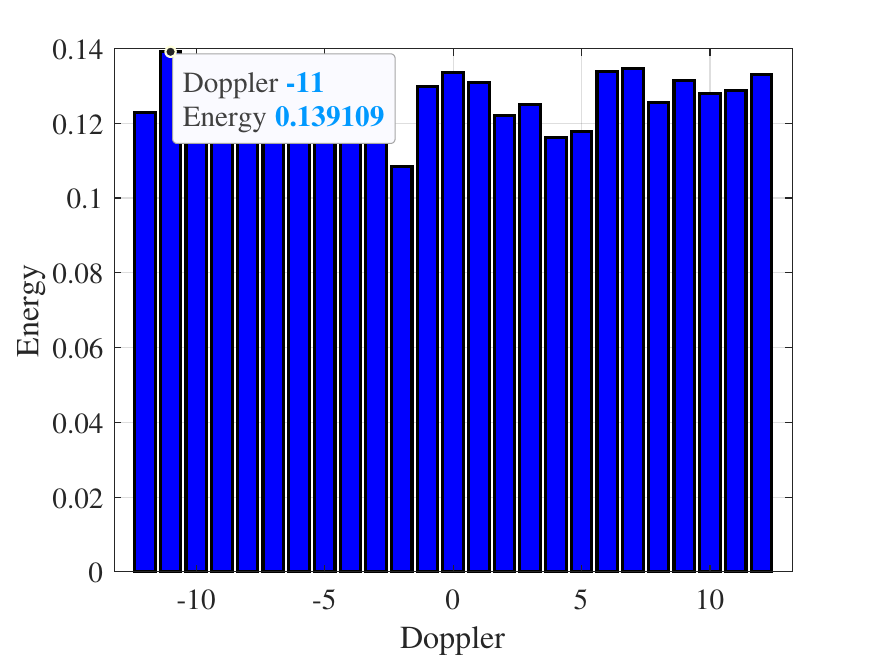}}
\subfloat[]{\label{subfig:3iter}
		\includegraphics[scale=0.3]{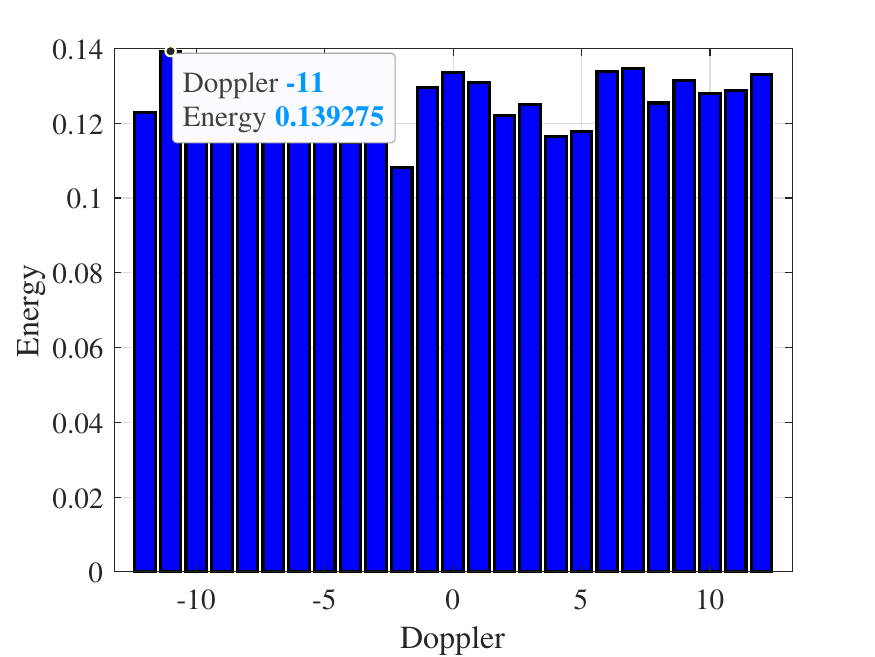}}
\caption{Doppler-dimension energy distribution of the received signal used for channel estimation during the iterative process ($\mathrm{SNR}_{\mathrm d}$ = 10 dB). (a) Original energy distribution. (b) Energy distribution after the 1st iteration. (c) Energy distribution after the $(W-1)$-th iteration. (d) Energy distribution after the $W$-th iteration.}
\label{fig:y_iter}
\end{figure*}

Fig.~\ref{fig:y_iter} illustrates the iterative elimination process of the received signal after passing through the channel shown in Fig.~\ref{fig:H_DD_distributions}\subref{subfig:frac_DSE}. 
Fig.~\ref{fig:y_iter}\subref{subfig:0iter} shows the energy distribution of the original received signal along the Doppler dimension. Fig.~\ref{fig:y_iter}\subref{subfig:1iter} presents the energy distribution after the first iteration, where the influence of the dominant channel component has already been removed. Fig.~\ref{fig:y_iter}\subref{subfig:2iter} depicts the state after the $(W-1)$-th iteration, when all dominant components have been eliminated, leaving only noise in the received signal. Finally, Fig.~\ref{fig:y_iter}\subref{subfig:3iter} shows that the maximum energy detected in the $W$-th iteration differs only slightly from that in the $(W-1)$-th iteration, indicating that all significant channel parameters have been successfully extracted and the iteration has reached its termination criterion. A summary of the proposed PAICR algorithm is provided in Algorithm~\ref{alg:PAICR}.

\begin{algorithm}[t]
\caption{Proposed PAICR Algorithm.}
\label{alg:PAICR}
\renewcommand{\algorithmicrequire}{\textbf{Input:}}
\renewcommand{\algorithmicensure}{\textbf{Output:}}
\begin{algorithmic}[1]
\REQUIRE $\mathbf {y}_{p}$, $\overline{\boldsymbol{\Phi }}_{p}$.
\ENSURE $\hat {\mathbf H}^{\mathrm {DD}}$.
\STATE Initialize $w=0$, $\mathbf{y}_0 = \mathbf{y}_p$.
\STATE \textbf{repeat} 
\STATE \hspace{0.5cm} Set $w = w+1$.
\STATE \hspace{0.5cm} Coarse channel estimation using SBL algorithm.
\STATE \hspace{0.5cm} Extract channel parameters $\hat {k}_w, \hat {l}_w, \hat{\mathcal{G}}_w$.
\STATE \hspace{0.5cm} Compute $\hat{\mathbf{y}}_p$ by~\eqref{eq:y_hat}.
\STATE \hspace{0.5cm} Update $\mathbf{y}_w$ by~\eqref{eq:y_iteration_elimination}.
\STATE \hspace{0.5cm} Compute $\mathbf E_w$ by~\eqref{eq:doppler_energy}.
\STATE \hspace{0.5cm} Compute $\epsilon_w$ by~\eqref{eq:convergence}.
\STATE \textbf{until} $\epsilon_w < 10^{-3}$. 
\STATE Reconstruct the channel $\hat {\mathbf H}^{\mathrm {DD}}$ by~\eqref{eq:H_DD_hat}.
\end{algorithmic}
\end{algorithm}


\subsection{Complexity Analysis}

\begin{table}[t]
\begingroup
\caption{Comparison of Complexity Orders\label{tab:complexity}}
\centering
\renewcommand{\arraystretch}{1.8} 
\begin{tabular}{>{\centering\arraybackslash}p{2cm}|>{\centering\arraybackslash}p{6cm}} 
\hline 
\textbf{Scheme} & \textbf{Complexity order} \\
\hline
AMP~\cite{8836636}  & $\mathcal{O}\left( N_{\text{iter}} \left( M_TN_T \right)^2 \right)$ \\
\hline
VAMP~\cite{8713501}  & $\mathcal{O}\left( M_TN_T \min \left( M_T,N_T\right) + N_{\text{iter}} R M_TN_T \right)$ \\
\hline
OMP~\cite{10103827}  & $\mathcal{O}\left( N_{\text{iter}}^3 M_TN_T M_{\tau}N_{\nu} \right)$ \\
\hline
SBL~\cite{10475894}  & $\mathcal{O}\left( N_{\text{iter}} M_TN_T \left(M_{\tau}N_{\nu}\right)^2 \right)$ \\
\hline
Proposed PAICR  & $\mathcal{O}\left( W \left(N_{\text{iter}} M_TN_T \left(M_{\tau}N_{\nu}\right)^2 + \left(M_TN_T\right)^3 \right)\right)$ \\
\hline
\end{tabular}
\endgroup
\end{table}

In this subsection, the complexity order of the proposed scheme is analyzed and compared with several benchmark algorithms, as summarized in Table~\ref{tab:complexity}. Here, $R$ denotes the rank of the measurement matrix and $N_{\text{iter}}$ denotes the maximum iteration number of the considered algorithms.
The overall complexity of the proposed PAICR algorithm mainly comprises two components: coarse channel estimation and iterative channel parameter extraction. The complexity of the coarse channel estimation is determined by the SBL algorithm, with complexity order $\mathcal{O}\left( N_{\text{iter}} M_TN_T \left(M_{\tau}N_{\nu}\right)^2 \right)$. 
In the iterative channel parameter extraction, the major computational burden lies in the computation of $\hat{\mathbf{y}}_p$ at each iteration, whose complexity order is $\left(M_TN_T\right)^3$. 
Therefore, the total complexity order of the proposed PAICR algorithm is given by $\mathcal{O}\left( W \left(N_{\text{iter}} M_TN_T \left(M_{\tau}N_{\nu}\right)^2 + \left(M_TN_T\right)^3 \right)\right)$.
Despite its relatively high complexity order, the proposed scheme achieves a favorable performance-complexity trade-off.
As demonstrated in the simulation results shown in Fig.~\ref{fig:numIter} in Section~\ref{section:5}, the number of outer iterations $W$ typically remains below 12.
More importantly, the increased computational cost is mainly incurred to enhance the accuracy of channel estimation, by iteratively extracting channel parameters to effectively suppress the severe ISI induced by DSE.
Therefore, given the limited number of outer iterations and the substantial improvement in channel estimation performance, the overall complexity remains acceptable for practical implementation.


\subsection{Cramér-Rao Lower Bound}
To characterize the fundamental performance limits of CP-OTFS-based LEO satellite systems in terms of channel estimation, the CRLB is derived in this subsection.

According to \eqref{eq:hp_for_channel_estimation}, the unknown parameters are $\boldsymbol{\mathcal G} = \left [ \mathcal G_{1}, \mathcal G_{2}, \dots , \mathcal G_{P} \right ]$, $\boldsymbol{k} = \left[ k_1, k_2, \dots, k_{P}\right]$ and $\boldsymbol{l} = \left[ l_1, l_2, \dots, l_{P}\right]$, therefore we aim at deriving the Fisher information matrix (FIM). For the $i$-th path, the partial derivatives of $\mathbf{H}^{\mathrm{DD}}$ with respect to the channel gain $\mathcal G_i$, Doppler index $k_i$, and delay index $l_i$ are respectively given by~\eqref{eq:derivative_crlb} at the bottom of this page.
\begin{figure*}[hb]
\centering
\hrulefill
\fontsize{9}{11}\selectfont
\begin{equation}
\left\{
\begin{aligned}
\frac{\partial {H}^{\mathrm{DD}}[k',l']}{\partial {\mathcal G_{i}}}
&= e^{-j \pi \frac{ k'(N-1)}{N}} e^{j 2\pi \frac{k_{i} l_{i}}{N(M+M_\mathrm{cp})}}
e^{j \pi \frac{ (l'-l_i) (M-1)}{M}}  e^{j \pi k_{i} a} e^{-j\pi (N-1)b_i} e^{-j\pi (M-1)c_i} \sum_{n=0}^{N-1} e^{j2\pi n b_i}  \sum_{m=0}^{M-1} e^{j2\pi m c_i}, \\[1ex]
\frac{\partial {H}^{\mathrm{DD}}[k',l']}{\partial k_{i}}
&= j \pi{H}^{\mathrm{DD}}[k',l']
\bigg( {  a + \frac{2l_i}{N(M+M_\mathrm{cp})} } + \left(\frac{1}{N}+\frac{\Delta f (M-1)}{2f_{c}N} \right)
\mathcal{F}(b_i, N)
+ \frac{\Delta f (N-1)}{2f_{c}N}
\mathcal{F}(c_i, M) \bigg), \\[1ex]
\frac{\partial {H}^{\mathrm{DD}}[k',l']}{\partial l_{i}}
&= j\pi{H}^{\mathrm{DD}}[k',l']
\bigg( \left( \frac{2k_i}{N(M+M_\mathrm{cp})}-\frac{M-1}{M}\right)
-\frac{1}{M}  \mathcal{F}(c_i, M)\bigg).
\label{eq:derivative_crlb}
\end{aligned}
\right.
\end{equation}
\end{figure*}
We define the gradient matrix as
\begin{equation} 
\mathbf {G} = [\boldsymbol{\gamma}_1, \boldsymbol{\gamma}_2, \dots , \boldsymbol{\gamma}_{P}],
\end{equation}
where $\boldsymbol{\gamma}_i = [\text{vec}(\frac{\partial \mathbf {H}^{\mathrm{DD}}}{\partial {\mathcal G_{i}}}), \text{vec}(\frac{\partial \mathbf {H}^{\mathrm{DD}}}{\partial {k_{i}}}), \text{vec}(\frac{\partial \mathbf {H}^{\mathrm{DD}}}{\partial {l_{i}}})]$. Then, the FIM is given by
\begin{equation} 
\mathbf{J} = \frac{2}{\sigma ^2} \Re \left\{ \left(  \mathbf{ \widetilde \Phi }_p \mathbf{G} \right)^H \left( \mathbf{ \widetilde \Phi }_p \mathbf{G} \right) \right\},
\end{equation}
where $\boldsymbol{\widetilde \Phi }_{p} \in \mathbb{C}^{M_TN_T \times M_TN_T}$ is the sensing matrix without channel information for pilot symbols, which can be expressed as
\begin{equation} 
\begin{aligned}
\boldsymbol{\widetilde \Phi }_{p}[Mk'+l',Mk+l]&=\frac{1}{NM}\theta_{k^{\prime}}(l,l^{\prime}) \\
&\times X^{\mathrm {DD}}\left [{\left \langle{ k-k^{\prime }}\right \rangle _{N}, \left ({l-l^{\prime}}\right )_{M}}\right].
\end{aligned}
\end{equation}
Consequently, the CRLB of $\mathbf {H}^{\mathrm{DD}}$ is obtained as
\begin{equation} 
\text{CRLB}(\mathbf {H}^{\mathrm{DD}}) = \mathrm{tr}\left( \mathbf{G} \mathbf{J}^{-1} \mathbf{G}^{H} \right),
\end{equation}
which provides a theoretical lower bound on the estimation error variance of $\mathbf {H}^{\mathrm{DD}}$ and serves as a benchmark for evaluating the performance loss of practical channel estimation algorithms.
\section{Numerical Results}\label{section:5}

\begin{table}[t]
\caption{Simulation Parameters\label{tab:parameter}}
\centering
\begin{tabular}{>{\centering\arraybackslash}p{4.5cm}|>{\centering\arraybackslash}p{2.5cm}} 
\hline
Parameter & Values
\\
\hline
Earth radius  & 6371 km \\
Satellite height  & 1000 km \\
Elevation angle & $50^{\circ}$ \\
Satellite speed & 7562.2 m/s \\
Terminal speed & 50 m/s \\
Modulation alphabet & 4-QAM \\
Carrier frequency $(f_c)$ & 2 GHz \\
Subcarrier spacing $(\Delta f)$ & 240 kHz \\
Number of subcarriers $(M)$ & [32, 64, 128] \\
Number of time slots $(N)$ & [32, 64, 128] \\
Length of CP $(M_{\mathrm{cp}})$ & $M/4$ \\
\hline
\end{tabular}
\end{table}

In this section, we conduct numerical results to evaluate the performance of the proposed transmission scheme. 
We employ the NTN-TDL-B channel model~\cite{3GPP}, where the number of paths is set to $P=4$, and each delay tap exhibits a single Doppler shift generated according to the Jakes' model.
Table~\ref{tab:parameter} summarizes the typical values of relevant simulation parameters.
The signal-to-noise ratios (SNRs) for data and pilot symbols are defined as
$\mathrm{SNR}_{\mathrm d} = \mathbb{E}\{|X^{\mathrm{DD}}_{\mathrm d}|^2\}/\sigma^2$
and
$\mathrm{SNR}_{\mathrm p} = \mathbb{E}\{|X^{\mathrm{DD}}_{\mathrm p}|^2\}/\sigma^2$,
respectively, where $\sigma^2$ denotes the noise variance.
Unless otherwise specified, all simulations are conducted with a frame size of $N = 32$ and $M = 32$, and the transmit power of pilot symbols is set to be 30~dB higher than that of data symbols. Simulation results are averaged over 600 randomized trials.

\subsection{Performance Evaluation of the OTFS Frame Structure}
To evaluate the PAPR and OOBE performance of the proposed OTFS frame structure, we select the embedded pilot (EP) scheme~\cite{8671740} and the superimposed pilot (SP) scheme~\cite{9539066} as benchmark schemes. In both the EP and the proposed schemes, the power of the pilot symbols is 30 dB higher than that of the data symbols, while in the SP scheme, the pilot symbol power is set to $\frac{3}{7}$ of the data symbol power. Moreover, all signals are oversampled by a factor of 4 like~\cite{tellado2005multicarrier}. To ensure fairness, all generated time-domain signals are normalized in power.

\begin{figure}[!t]
\centering
\subfloat[]{
\includegraphics[scale=0.6]{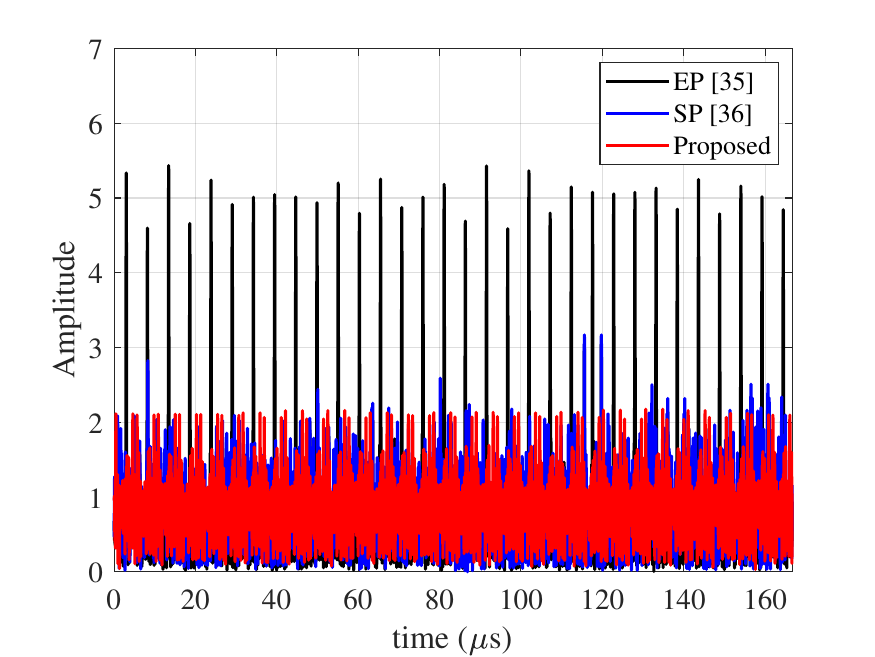}
\label{fig:amplitude}}
\hfil
\subfloat[]{
\includegraphics[scale=0.6]{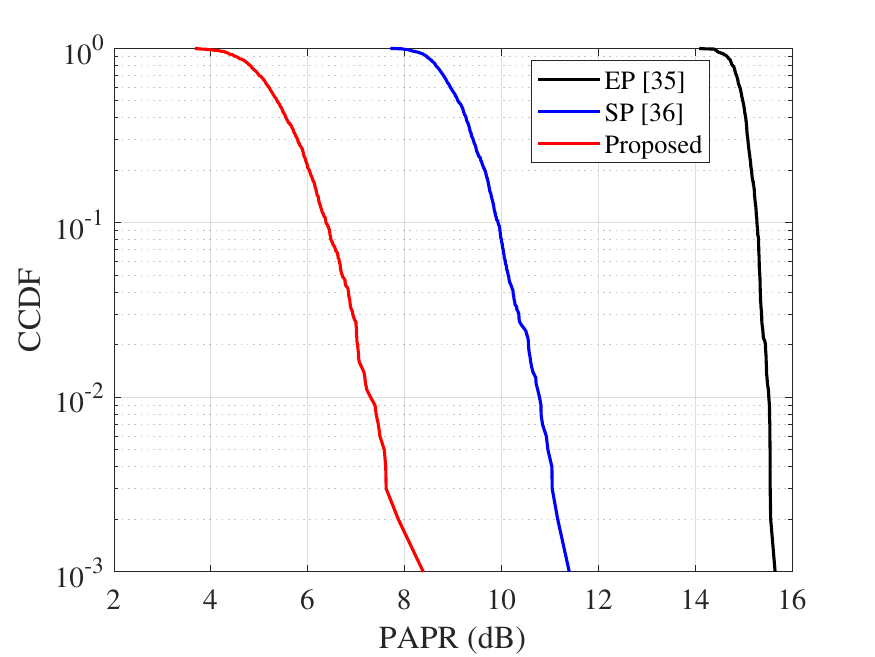}
\label{fig:ccdf}}
\caption{PAPR performance comparison of different pilot arrangement schemes. (a) Time-domain amplitude distributions. (b) CCDF of the PAPR.}
\label{fig:PAPR}
\end{figure}

Fig.~\ref{fig:PAPR} compares the time-domain amplitude distributions and PAPR performance of the EP, SP, and the proposed scheme.
Fig.~\ref{fig:PAPR}\subref{fig:amplitude} shows the time-domain amplitude distributions for the three schemes. Most samples across all schemes exhibit amplitudes below 2. However, the EP scheme, due to the isolated insertion of a high-power pilot symbol in the DD domain, produces significant peaks in the time domain, with a maximum amplitude exceeding 5. The SP scheme, which superimposes a low-power pilot on the original data symbol, mitigates the peaks but still shows abrupt variations. In contrast, the proposed pilot scheme, despite using high-power pilots, concentrates all pilots at $k=0$. After mapping to the time domain, this yields a smoother waveform without sharp peaks, maintaining peak amplitudes below 2.2.
Fig.~\ref{fig:PAPR}\subref{fig:ccdf} presents the complementary cumulative distribution functions (CCDF) of the PAPR for the three schemes. The results confirm that the proposed pilot scheme achieves the lowest PAPR levels. This is primarily because, after ISFFT and OFDM modulation, the pilot energy in the proposed scheme spreads more evenly across the entire time-domain waveform, increasing the average power and balancing the power distribution of the signal, thereby significantly reducing the PAPR.

\begin{figure}[ht]
\centering
\includegraphics[scale=0.6]{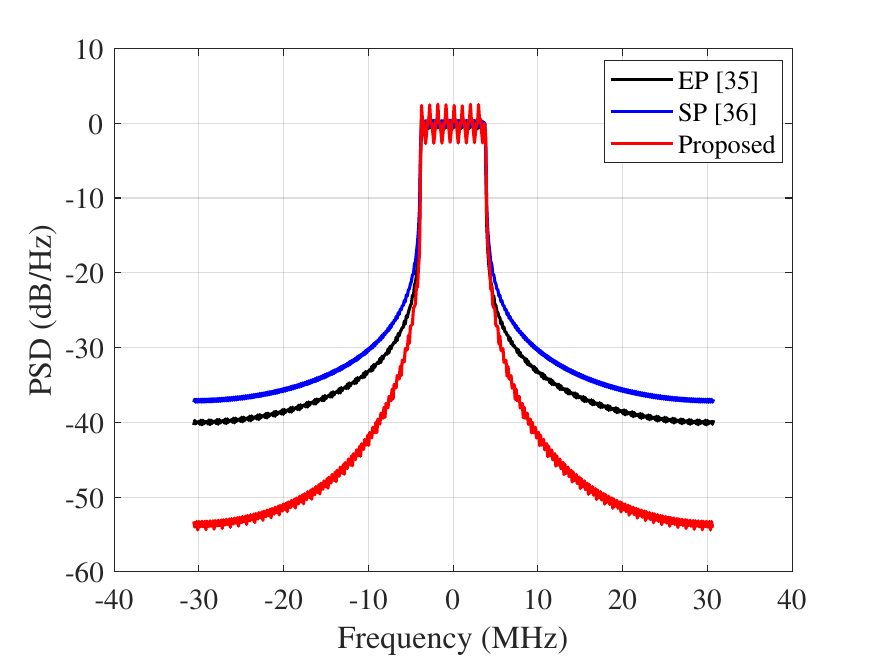}
\caption{OOBE performance comparison of different pilot arrangement schemes.}
\label{OOBE}
\end{figure}

Fig.~\ref{OOBE} compares the performance of the EP, SP, and the proposed frame structure design in terms of the power spectral density (PSD). It should be noted that the spectral leakage in PSD is a direct indicator of OOBE level. 
Simulation results show that the proposed pilot scheme achieves the best performance in OOBE suppression. This improvement is primarily attributed to the continuity of the time-domain waveform at symbol stitching boundaries (i.e., at $t=nT_\mathrm{sym}$), which effectively reduces waveform discontinuities and thereby suppresses spectral leakage. Furthermore, the proposed design retains the simple implementation of the rectangular pulse structure while optimizing spectral characteristics, demonstrating a good balance between practicality and performance.

\subsection{Channel Estimation Performance}
In this subsection, we evaluate the proposed PAICR algorithm against several benchmark methods, including AMP~\cite{8836636}, VAMP~\cite{8713501}, OMP~\cite{10103827}, and SBL~\cite{10475894} algorithms.
The normalized mean square error (NMSE) is defined as ${\mathrm{ NMSE}} = \frac {\left \|{\bf {h}- \bf {\hat {h}}}\right \|_{2}^{2}}{\left \|{\mathbf{h}}\right \|_{2}^{2}}$, where $\mathbf{h}$ and ${\hat {\mathbf{h}}}$ denote the vectorization of the true and estimated DD-domain channel matrices $\mathbf H^{\mathrm {DD}}$ and $\hat {\mathbf H}^{\mathrm {DD}}$, respectively.

\begin{figure}[t]
\centering
\includegraphics[scale=0.6]{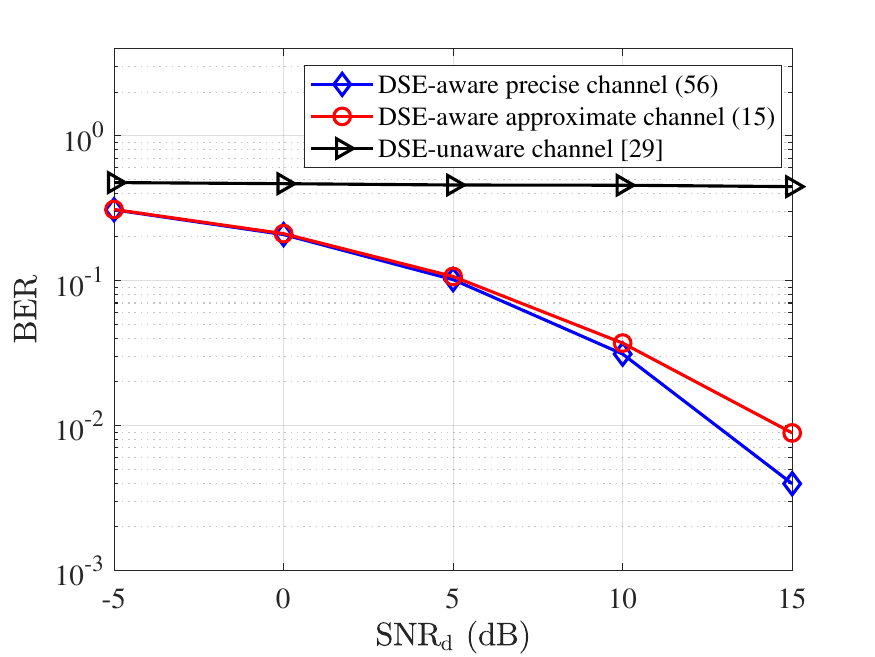}
\caption{BER performance comparison under MMSE-based data detection for different channel characterizations.}
\label{fig:ber}
\end{figure}

Fig.~\ref{fig:ber} shows the BER performance under MMSE-based data detection for distinct channel characterizations, including the DSE-aware precise channel in~\eqref{eq:preH}, the DSE-aware approximate channel in~\eqref{eq:H_DD}, and the DSE-unaware channel~\cite{10679987}. It can be observed that the DSE-unaware channel yields the worst detection performance, as it fails to account for the power dispersion effect in the channel, leading to modeling inaccuracies. In contrast, the DSE-aware approximate channel performs similarly to the DSE-aware precise channel under low SNR conditions. When SNR = 15 dB, the BER of the DSE-aware approximate channel is only 0.0049 higher than that of the precise model. This result further verifies the effectiveness and practicality of the proposed approximation model.

\begin{figure}[t]
\centering 
\includegraphics[scale=0.6]{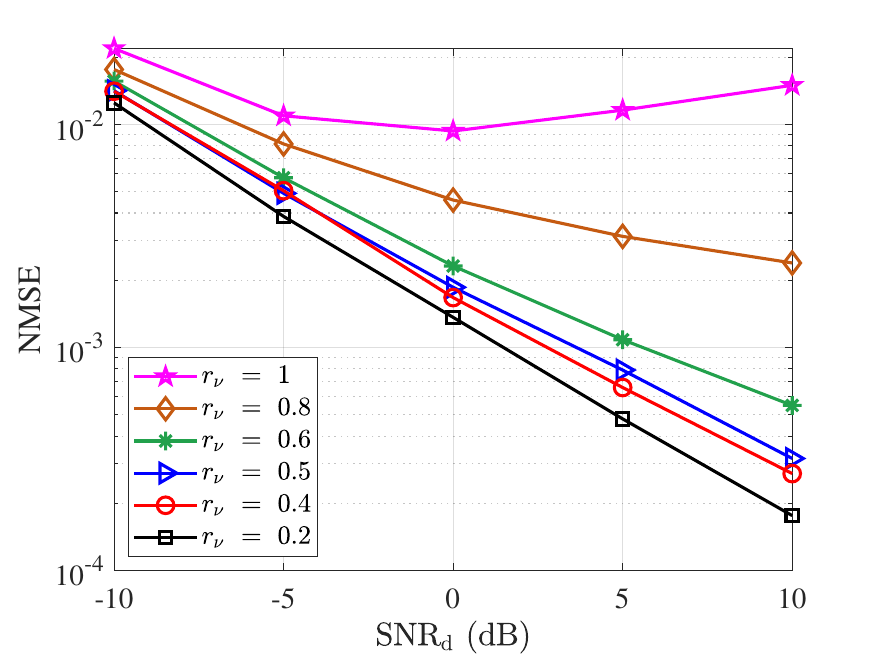}
\caption{NMSE performance comparison of the proposed PAICR algorithm for different virtual Doppler resolutions.} 
\label{fig:resolution}
\end{figure}

Fig.~\ref{fig:resolution} illustrates the NMSE performance of the proposed PAICR algorithm under different virtual Doppler resolutions. It is observed that decreasing $r_\nu$ from 1 to 0.6 leads to a pronounced NMSE reduction, indicating that finer virtual Doppler grids help better capture the dominant channel components. However, when $r_\nu$ is further reduced from 0.5 to 0.2, only marginal performance improvement is achieved, with the NMSE decreasing by approximately $0.0003$ at $\mathrm{SNR}_\mathrm{d} = 10$~dB. This behavior suggests that the proposed PAICR algorithm is insensitive to excessively fine virtual Doppler resolutions and can already achieve near-saturated performance with a moderate grid density. Considering that a smaller $r_\nu$ inevitably incurs higher computational complexity, $r_\nu = 0.5$ is therefore selected as the default virtual Doppler resolution for subsequent simulations, striking an effective balance between estimation accuracy and computational efficiency.

\begin{figure}[!t]
\centering 
\subfloat[]{
\includegraphics[scale=0.6]{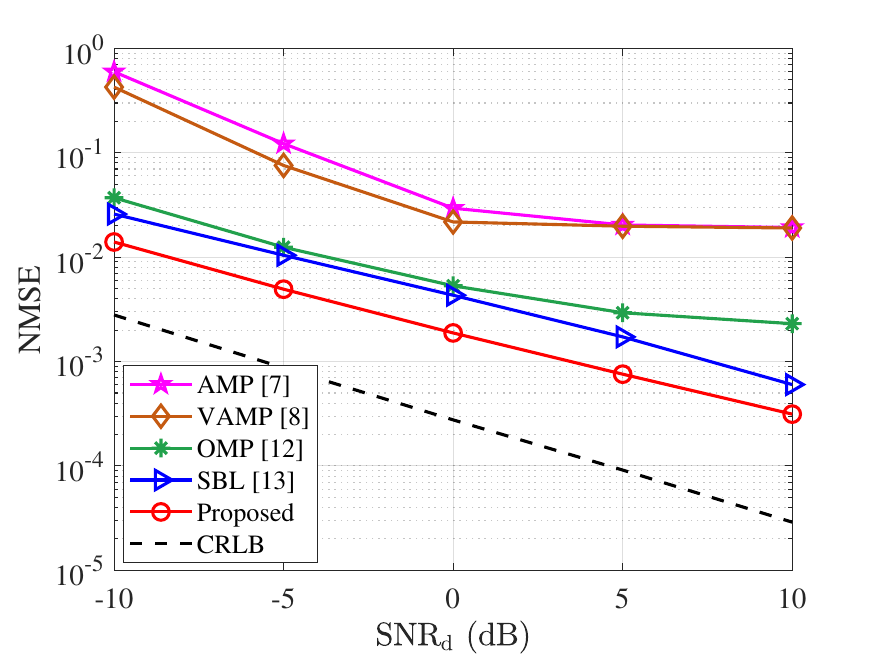}
\label{fig:M32N32_d}}
\hfil
\subfloat[]{
\includegraphics[scale=0.6]{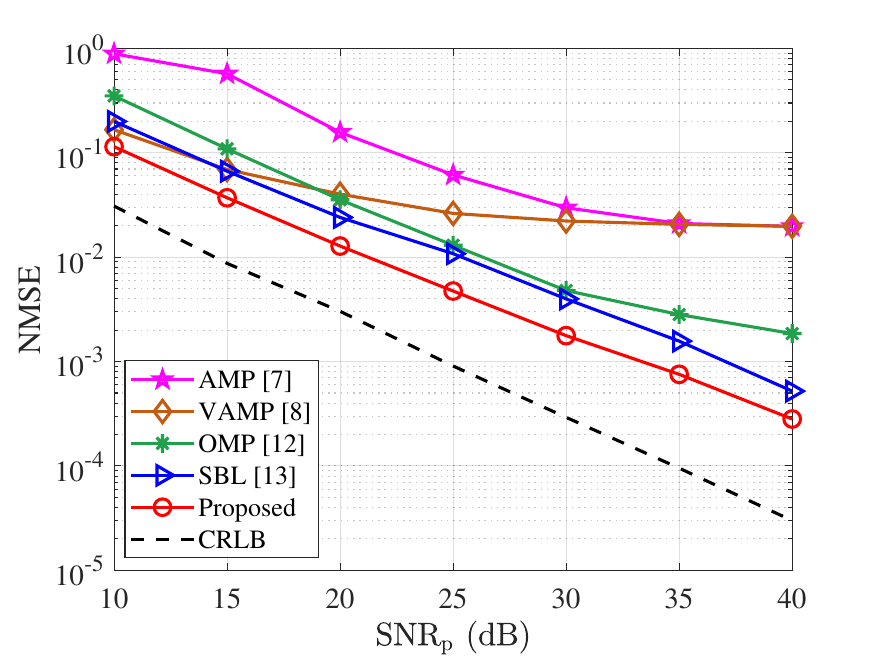}
\label{fig:snr_p}}
\caption{NMSE performance comparison of different algorithms.  (a) NMSE versus $\mathrm{SNR}_\mathrm{d}$. (b) NMSE versus $\mathrm{SNR}_\mathrm{p}$ with $\mathrm{SNR}_\mathrm{d}=10$~dB.} 
\label{fig:M32N32}
\end{figure}

Fig.~\ref{fig:M32N32} compares the NMSE performance of different channel estimation algorithms.
As shown in Fig.~\ref{fig:M32N32}\subref{fig:M32N32_d}, when $\mathrm{SNR}_\mathrm{d}$ increases from $-10$ dB to 10 dB, the proposed PAICR algorithm consistently achieves the lowest NMSE among all benchmark schemes. This performance gain arises because conventional AMP- and VAMP-based algorithms fail to account for the channel power leakage induced by DSE. In contrast, the proposed PAICR algorithm starts from an initial coarse channel estimate obtained via the SBL algorithm and further iteratively extracts dominant channel parameters while eliminating their contributions from the received signal. By progressively mitigating the interference caused by power leakage, the proposed approach is able to reconstruct a more complete and accurate channel response.
Fig.~\ref{fig:M32N32}\subref{fig:snr_p} further illustrates the NMSE performance versus $\mathrm{SNR}_\mathrm{p}$ with $\mathrm{SNR}_\mathrm{d}$ fixed at 10 dB. It can be observed that increasing the pilot power improves the estimation accuracy of all algorithms. However, the proposed PAICR algorithm consistently maintains a clear performance advantage over the benchmark schemes and converges closer to the CRLB.
Overall, these results confirm that by leveraging prior information from Doppler-domain energy observations and iterative dominant component cancellation, the proposed PAICR algorithm enables accurate channel reconstruction under DSE-aware channel conditions, outperforming existing benchmark methods.


\begin{figure}[!t]
\centering
\subfloat[]{
\includegraphics[scale=0.6]{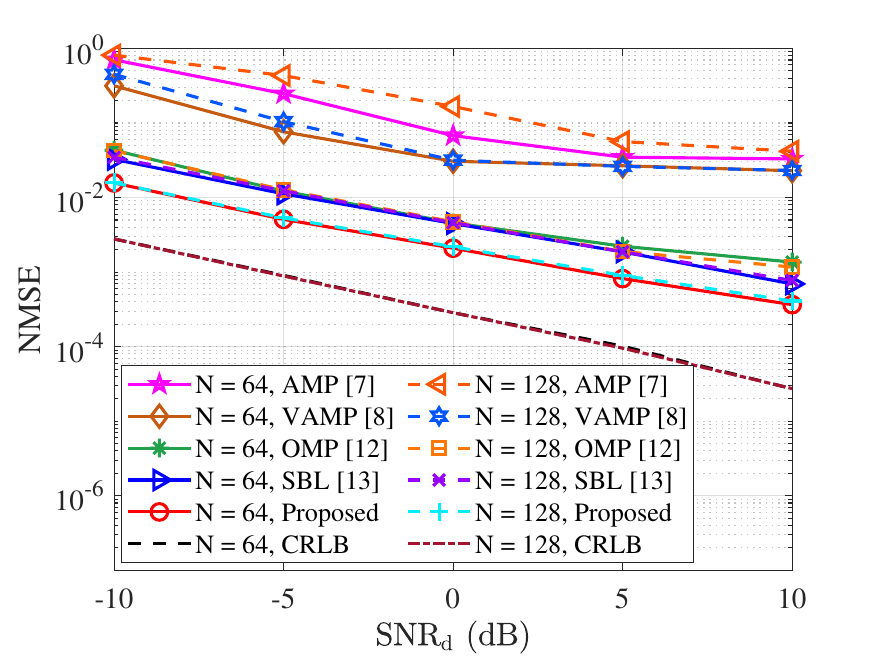}
\label{fig:N}}
\hfil
\subfloat[]{
\includegraphics[scale=0.6]{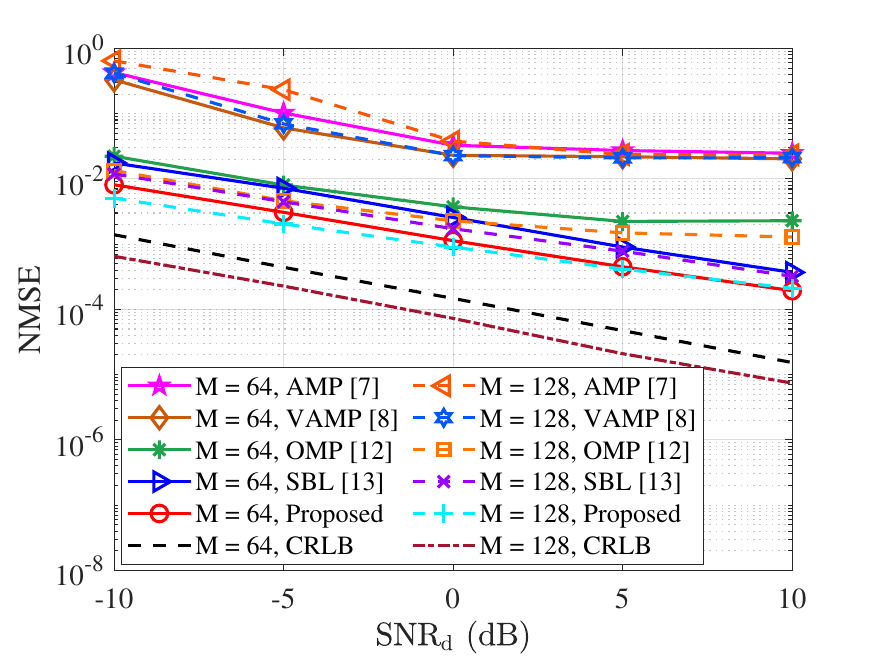}%
\label{fig:M}}
\caption{NMSE performance vs. frame size. (a) Fixed $M = 32$ with varying $N$. (b) Fixed $N = 32$ with varying $M$.}
\label{fig:frame}
\end{figure}

Fig.~\ref{fig:frame} examines the impact of different frame sizes on channel estimation performance in terms of NMSE. According to~\eqref{eq:H_DD}, as $M$ and $N$ increase, the main lobe of the sinc function becomes narrower and the number of sidelobes increases, causing the channel energy to spread over multiple DD-domain positions. This reduces the sparsity of the channel, thereby degrading estimation performance.
Fig.~\ref{fig:frame}\subref{fig:N} presents the estimation performance under different values of $N$. 
It can be observed that the estimation performance of the PAICR remains stable as $N$ increases.
This is because the proposed PAICR algorithm reconstructs the channel based on extracted physical channel parameters, allowing it to effectively recover energy leakage caused by DSE.
Fig.~\ref{fig:frame}\subref{fig:M} shows the results for different values of $M$. Unlike the case with $N$, the estimation performance of the proposed PAICR algorithm improves as $M$ increases. This is attributed to the proposed OTFS frame structure, which distributes pilot symbols uniformly along the delay dimension, enabling effective coverage of the dispersed channel energy. As a result, the accuracy of parameter extraction is improved, leading to enhanced overall estimation performance.

\begin{figure}[t]
\centering 
\includegraphics[scale=0.6]{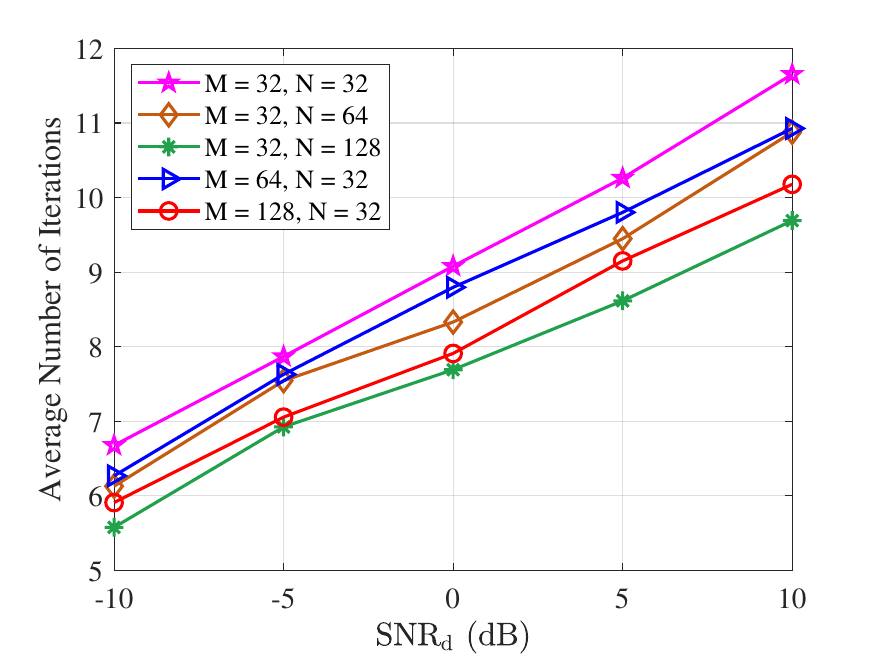}
\caption{Average number of iterations of the proposed PAICR algorithm versus frame size.} 
\label{fig:numIter}
\end{figure}

Fig.~\ref{fig:numIter} depicts the average number of outer iterations required by the proposed PAICR algorithm under different frame sizes. It can be observed that the average iteration number increases moderately with $\mathrm{SNR}_\mathrm{d}$, since a higher $\mathrm{SNR}_\mathrm{d}$ enables the extraction of more resolvable channel components, which in turn requires additional iterative refinements. 
Moreover, smaller frame sizes generally result in a slightly higher iteration count, as the reduced observation dimensionality leads to less concentrated Doppler-domain energy representations, such that each iteration can only extract and suppress a limited portion of the dominant channel components. Consequently, more outer iterations are required to progressively mitigate the residual power leakage.
Nevertheless, for different frame sizes, the number of outer iterations remains below 12, confirming that the proposed algorithm can converge rapidly.
More importantly, the additional computational burden is mainly devoted to iteratively extracting dominant channel parameters and suppressing the severe ISI induced by DSE. As a result, the overall computational complexity of the proposed PAICR algorithm remains acceptable.

\section{Conclusion}\label{section:6}
In this paper, we developed a DSE-resilient transmission scheme tailored for CP-OTFS-based LEO satellite communication systems, where a novel OTFS frame structure was optimized for both low OOBE and PAPR. We conducted DSE-aware channel characterization of the satellite-terrestrial channel in the DD domain, which unveiled that DSE caused power leakage, thus undermining reliable parameter estimation. 
To resolve this, we proposed a PAICR algorithm, which mitigated DSE-induced distortions and enabled accurate path parameter extraction.
A corresponding CRLB was derived to provide a theoretical performance benchmark.
Simulation results underscored the critical need to account for DSE in CP-OTFS-based LEO satellite systems and validated the effectiveness of the proposed approach.

{\appendix[Proof of Proposition 1]}
\allowdisplaybreaks
According to~\eqref{eq:Y_TF}, we have
\begin{equation} 
\begin{aligned}
&\hspace {-1pc}{{Y}^\mathrm{TF}}[n,m]\\
=&\sum_{m^{\prime}=0}^{M-1}\sum_{n^{\prime}=0}^{N-1}{{X}^\mathrm{TF}}\left [{ {n^{\prime},m^{\prime}} }\right]\iint{h\left(\tau ,\nu \right)}{{e}^{-j 2\pi m\Delta f \tau}}\\
&{{A}_{{{g}_{\mathrm{rx}}},{{g}_{\mathrm{tx}}}}}\left (\left (n-{n^{\prime}}\right ){{T}_{\mathrm{sym}}}-\tau ,\left(m-m^{\prime}\right)\Delta f-\nu \right ) \\
&{{e}^{j 2\pi \nu \left(\tau +{n^{\prime}}{{T}_{\mathrm{sym}}}+{{T}_{\mathrm{cp}}}\right)}}d\tau d\nu\\
\overset{\left(a\right)}{=}&\frac{1}{M}\sum_{m^{\prime}=0}^{M-1}\sum_{n=0}^{N-1}{{X}^\mathrm{TF}}\left [{ {n,m^{\prime}} }\right]\iint{h\left(\tau ,\nu \right)} {{e}^{-j 2\pi m\Delta f \tau}}\\
&\sum\limits_{q={{M}_{\mathrm{cp}}}-{{l}_{\tau }}}^{M+{{M}_{\mathrm{cp}}}-1-{{l}_{\tau }}}{{{e}^{-j 2\pi \left(\left(m-{m}^{\prime}\right)\Delta f-\nu \right)\frac{q}{M\Delta f}}}}{{e}^{j 2\pi \left(m-{m}^{\prime}\right)\Delta f{{T}_{\mathrm{cp}}}}}\\
&{{e}^{j 2\pi \nu \left(\tau +n{{T}_{\mathrm{sym}}}\right)}}d\tau d\nu \\ 
= & \frac{1}{\sqrt{MN}} \sum_{n=0}^{N-1} \sum _{k^{\prime}=\left\lceil -\frac{N}{2} \right\rceil}^{\left\lceil \frac{N}{2} \right\rceil - 1} \sum_{l^{\prime}=0}^{M-1}{{X}^{DD}}[{k}^{\prime},{l}^{\prime}]\iint{h\left(\tau ,\nu \right) } {{e}^{j2\pi \frac{m{{M}_{\mathrm{cp}}}}{M}}}\\
&{{e}^{j2\pi \frac{n{k}^{\prime}}{N}}} {e}^{j2\pi \nu n T_{\mathrm{sym}}}
{{e}^{j2\pi \left(\nu- m\Delta f\right) \left(\tau + \frac{{{M}_{\mathrm{cp}}}+l^{\prime}}{M\Delta f}\right)}} d\tau d\nu, 
\label{Y_TF1}
\end{aligned}
\end{equation} 
where $(a)$ is due to ${{l}_{\tau }}\triangleq \left\lceil \tau /\left(T/M\right) \right\rceil$, $T_{\mathrm{cp}}>\tau$ and ${{A}_{{{g}_{\mathrm{rx}}},{{g}_{\mathrm{tx}}}}}\left(\left(n-{n^{\prime}}\right){{T}_{\mathrm{sym}}}-\tau,\left(m-m^{\prime}\right)\Delta f-\nu \right) \ne 0$ if $n=n^{\prime}$. 
Then, by substituting~\eqref{Y_TF1} into~\eqref{eq:Y_DD}, $Y^{\mathrm {DD}}[k, l]$ is given by 
\begin{equation} 
\begin{aligned}
&\hspace {-1pc}{Y^{{\mathrm{DD}}}}\left [{ {k,l} }\right] = \frac {1}{NM} \sum _{k^{\prime}=\left\lceil -\frac{N}{2} \right\rceil}^{\left\lceil \frac{N}{2} \right\rceil - 1} \sum _{l^{\prime} = 0}^{M - 1} {H^{\mathrm{DD}}_{k,l}}\left [{ {k^{\prime},l^{\prime}} }\right] \times {{X^{{\mathrm{DD}}}}\left [{k^{\prime},l^{\prime}}\right]}\\
&\qquad \qquad + {V^{{\mathrm{DD}}}}\left [{k,l} \right],
\label{Y_DD2} 
\end{aligned}
\end{equation} 
where $\mathbf{H}^{\mathrm{DD}}_{k,l}$ is channel matrix in the DD domain.
The entries of $\mathbf{H}^{\mathrm{DD}}_{k,l}$ are given by
\begin{align*}
&\hspace {-1pc}H^{\mathrm{DD}}_{k,l}[k^{\prime},l^{\prime}] \\
=&\sum_{n=0}^{N-1} \sum_{m=0}^{M-1} \iint h\left(\tau,\nu\right)
e^{-j2\pi m \Delta f \tau} e^{j 2 \pi m \frac{l-l^{\prime}}{M}}
e^{-j 2 \pi n \frac{k-k^{\prime}}{N}} \\
&e^{j2\pi \nu \left(\tau + n T_{\mathrm{sym}} + \frac{M_{\mathrm{cp}}+l^{\prime}}{M\Delta f}\right)} d\tau d\nu  \\
=&\sum_{n=0}^{N-1} \sum_{m=0}^{M-1} \sum_{i=1}^{P} \mathcal G_i \frac{|\eta_i|}{|1 + \eta_i|}  
e^{j 2\pi \eta_i \nu_i \frac{ \tau_i + \frac{M_{\mathrm{cp}} + l^{\prime}}{M \Delta f} + n T_{\mathrm{sym}}}{1+\eta_i}} \\
&e^{-j 2\pi m \Delta f\frac{\eta_i \tau_i-\frac{M_{\mathrm{cp}} + l^{\prime}}{M \Delta f}-n T_{\mathrm{sym}}}{1+\eta_i}} e^{j 2 \pi m \frac{l-l^{\prime}}{M}}
e^{-j 2 \pi n \frac{k-k^{\prime}}{N}} 
\label{eq:preH}
\tag{56}
\end{align*}
Since $|\eta_i| \gg 1$, we have $1 + \eta_i \approx \eta_i$ and $1 - \frac{1}{\eta_i} \approx 1$. Therefore,~\eqref{eq:preH} can be approximated as
\setcounter{equation}{56}
\begin{equation} 
\begin{aligned}
&\hspace {-1pc}H^{\mathrm{DD}}_{k,l}[k^{\prime},l^{\prime}] \\
{\approx}& \sum_{n=0}^{N-1} \sum_{m=0}^{M-1} \sum_{i=1}^P \mathcal G_i e^{j 2 \pi \nu_i \tau_i} e^{j 2 \pi \nu_i \left(\frac{M_{\mathrm{cp}} + l^{\prime}}{M \Delta f} + n T_{\mathrm{sym}} \right)}
e^{-j 2 \pi m \Delta f \tau_i} \\
&e^{j 2 \pi \frac{m\left(l-l^{\prime}\right)}{M}}
e^{-j 2 \pi \frac{n\left(k-k^{\prime}\right)}{N}}
e^{j2\pi mn \frac{M+M_{\mathrm{cp}}}{M\eta_i}} \\
=& \sum_{n=0}^{N-1} \sum_{m=0}^{M-1}  \sum_{i=1}^P \mathcal G_i e^{j 2 \pi \frac{k_i \left(l^{\prime} + l_i + M_{\mathrm{cp}}\right)}{N\left(M + M_{\mathrm{cp}}\right)}}
e^{j 2 \pi \frac{m\left(l - l^{\prime} - l_i\right)}{M}} \\
&e^{-j 2 \pi \frac{n\left(k - k^{\prime} - k_i\right)}{N}} 
e^{j2\pi m n \frac{M+M_{\mathrm{cp}}}{M\eta_i}}.
\end{aligned}
\end{equation} 
According to~\eqref{Y_DD2}, $Y^{\mathrm {DD}}[k, l]$ can be written as
\begin{equation} 
\begin{aligned}
&\hspace {-0.5pc}{Y^{{\mathrm{DD}}}}\left [{{k,l}}\right] \\
&= \frac {1}{NM}\sum_{i=0}^{P} \sum _{k^{\prime}=\left\lceil -\frac{N}{2} \right\rceil}^{\left\lceil \frac{N}{2} \right\rceil - 1} \sum _{l^{\prime} = 0}^{M - 1} {e^{j 2 \pi \frac{k_i \langle l - l^{\prime} \rangle_M}{N\left(M + M_{\mathrm{cp}}\right)}}}H_{k_i,l_i}^{\mathrm{DD}}\left [{ {k^{\prime},l^{\prime}} }\right] \\
&\quad \times{{X^{{\mathrm{DD}}}}\left [{ \langle k - k^{\prime} \rangle_N, ( l - l^{\prime} )_M }\right]} + {V^{{\mathrm{DD}}}}\left [{ {k,l} }\right],
\end{aligned}
\end{equation} 
where $H_{k_i,l_i}^{\mathrm {DD}}[k^{\prime},l^{\prime}]$ is given by
\allowdisplaybreaks[4]
\begin{align*}
&\hspace {-1.5pc}H_{k_i,l_i}^{\mathrm {DD}}[k^{\prime},l^{\prime}]\\
=&\sum_{m=0}^{M-1}\sum_{n=0}^{N-1}
\mathcal G_{i}e^{j2\pi\frac{k_i\left(l_i+M_{\mathrm{cp}}\right)}{N\left(M+M_{\mathrm{cp}}\right)}}e^{j2\pi\frac{n\left(k_i-k^{\prime}\right)}{N}}
e^{j2\pi\frac{m\left(l^{\prime}-l_i\right)}{M}} \\
&e^{j2\pi mn \frac{M+M_{\mathrm{cp}}}{M\eta_i}}\\
=&\mathcal G_{i}
e^{j2\pi\frac{k_i\left(l_i+M_{\mathrm{cp}}\right)}{N\left(M+M_{\mathrm{cp}}\right)}}
e^{j\pi\left(N-1\right)\left(M-1\right)\frac{M+M_{\mathrm{cp}}}{2M\eta_i}}\\
&e^{j\pi\left(N-1\right)\frac{k_i-k^{\prime}}{N}} \frac{\sin\pi N \left(\frac{k_{i}-k^{\prime}}{N}+(M-1)\frac{M+M_{\mathrm{cp}}}{2M\eta_i}\right)}{\sin\pi\left(\frac{k_{i}-k^{\prime}}{N}+(M-1)\frac{M+M_{\mathrm{cp}}}{2M\eta_i}\right)}\\
&e^{j\pi\left(M-1\right)\frac{l^{\prime}-l_{i}}{M}} \frac{\sin \pi M\left(\frac{l^{\prime}-l_{i}}{M}+n\frac{M+M_{\mathrm{cp}}}{M\eta_i}\right)}{\sin\pi\left(\frac{l^{\prime}-l_{i}}{M}+n\frac{M+M_{\mathrm{cp}}}{M\eta_i}\right)}\\
\overset{\left(b\right)}{\approx}&\mathcal G_{i}
e^{j2\pi\frac{k_i\left(l_i+M_{\mathrm{cp}}\right)}{N\left(M+M_{\mathrm{cp}}\right)}}
e^{j\pi\left(N-1\right)\left(M-1\right)\frac{M+M_{\mathrm{cp}}}{2M\eta_i}}\\
&e^{j\pi\left(N-1\right)\frac{k_i-k^{\prime}}{N}} \frac{\sin\pi N \left(\frac{k_{i}-k^{\prime}}{N}+(M-1)\frac{M+M_{\mathrm{cp}}}{2M\eta_i}\right)}{\sin\pi\left(\frac{k_{i}-k^{\prime}}{N}+(M-1)\frac{M+M_{\mathrm{cp}}}{2M\eta_i}\right)}\\
&e^{j\pi\left(M-1\right)\frac{l^{\prime}-l_{i}}{M}} \frac{\sin \pi M\left(\frac{l^{\prime}-l_{i}}{M}+(N-1)\frac{M+M_{\mathrm{cp}}}{2M\eta_i}\right)}{\sin\pi\left(\frac{l^{\prime}-l_{i}}{M}+(N-1)\frac{M+M_{\mathrm{cp}}}{2M\eta_i}\right)}.
\tag{59}
\end{align*}
Note that we substitute $n$ with $\frac{N-1}{2}$ in $(b)$ to yield a closed-form expression. The proof is complete.

 
%

\bibliographystyle{IEEEtran}
\bibliography{paper_reference}

\begin{IEEEbiography}[{\includegraphics[width=1in,height=1.25in,clip,keepaspectratio]
{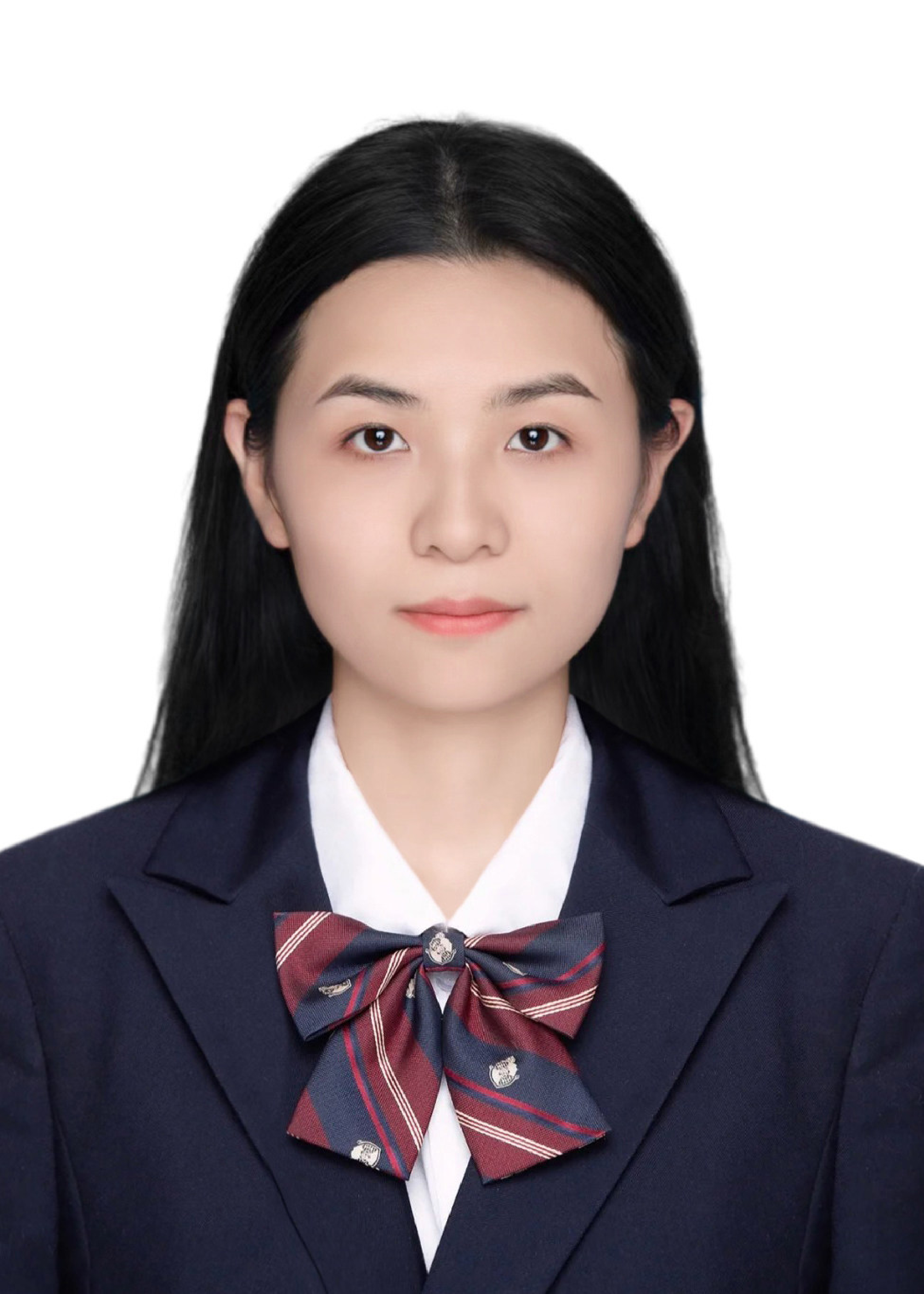}}]{Yiyan Cheng} received the B.S. degree in Communication Engineering from Sichuan Normal University, China, in 2023, and the M.S. degree in Information and Communication Engineering from Beijing University of Posts and Telecommunications, China, in 2026. Her research interests include satellite communications and orthogonal time frequency space (OTFS) modulation.
\end{IEEEbiography}

\begin{IEEEbiography}[{\includegraphics[width=1in,height=1.25in,clip,keepaspectratio]{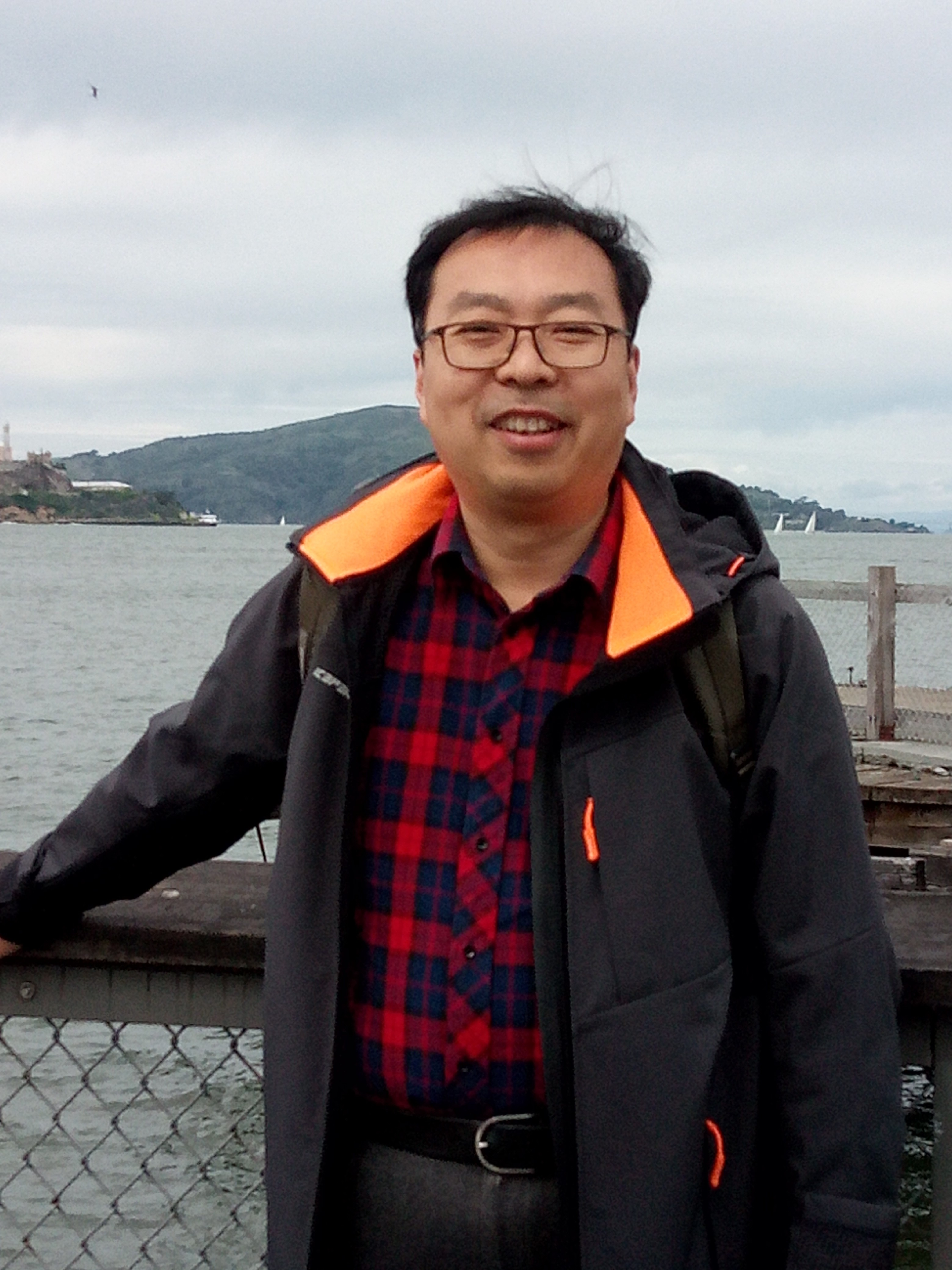}}]{Tiejun Lv}(Senior Member, IEEE) received the M.S. and Ph.D. degrees in electronic engineering from the University of Electronic Science and Technology of China (UESTC), Chengdu, China, in 1997 and 2000, respectively. From January 2001 to January 2003, he was a Postdoctoral Fellow at Tsinghua University, Beijing, China. In 2005, he was promoted to Full Professor at the School of Information and Communication Engineering, Beijing University of Posts and Telecommunications (BUPT). From September 2008 to March 2009, he was a Visiting Professor with the Department of Electrical Engineering at Stanford University, Stanford, CA, USA. He is the author of four books, one book chapter, more than 160 published journal papers and 220 conference papers on the physical layer of wireless mobile communications. His current research interests include signal processing, communications theory and networking. He was the recipient of the Program for New Century Excellent Talents in University Award from the Ministry of Education, China, in 2006. He received the Nature Science Award from the Ministry of Education of China for the hierarchical cooperative communication theory and technologies in 2015 and the Shaanxi Higher Education Institutions Outstanding Scientific Research Achievement Award in 2025.
\end{IEEEbiography}

\begin{IEEEbiography}[{\includegraphics[width=1in,height=1.25in,clip,keepaspectratio]
{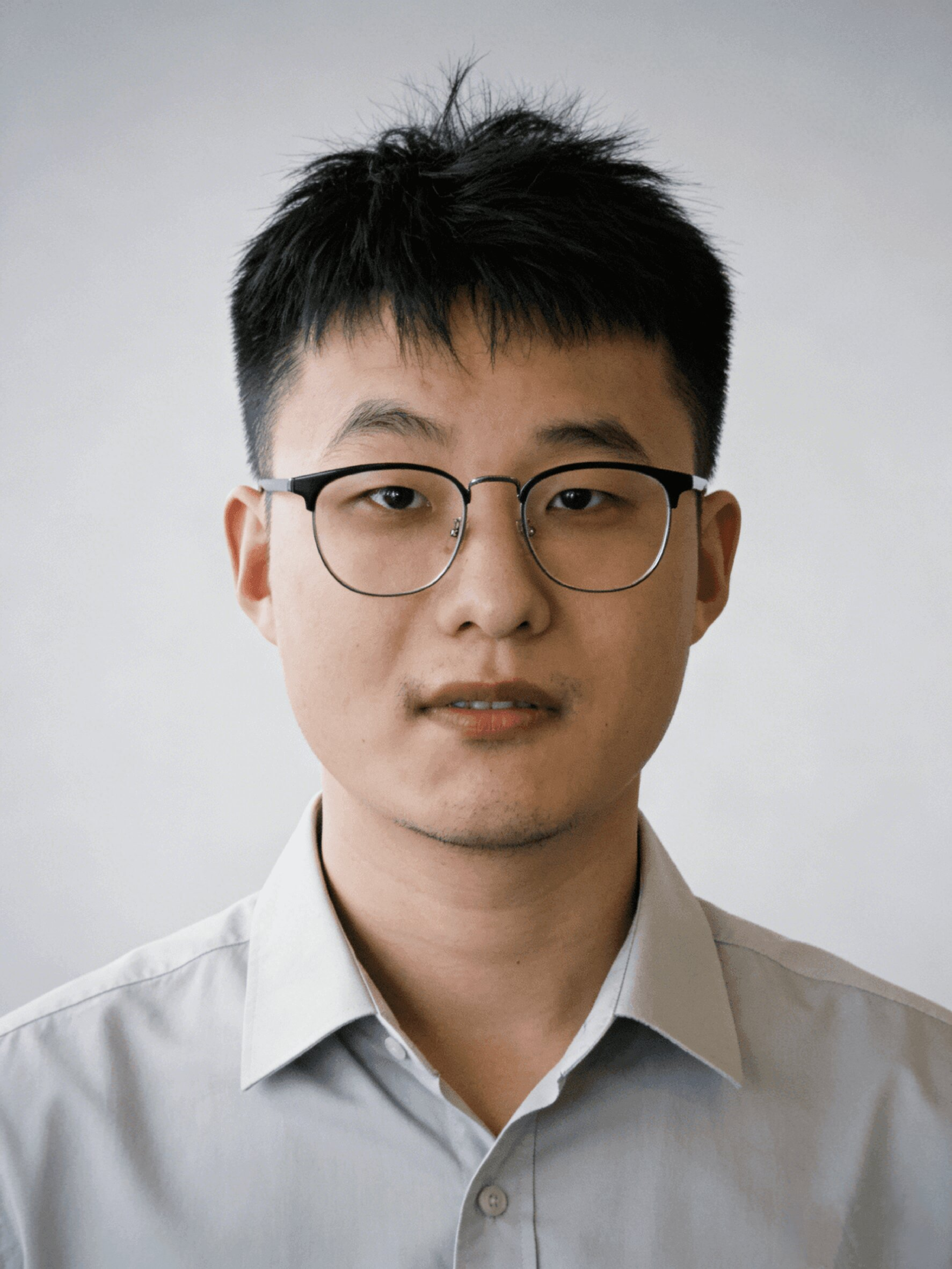}}]{Yashuai Cao} received the B.E. and Ph.D. degrees in communication engineering from Chongqing University of Posts and Telecommunications (CQUPT) and Beijing University of Posts and Telecommunications (BUPT), China, in 2017 and 2022, respectively. From 2022 to 2023, he was a lecturer in the Department of Electronics and Communication Engineering, North China Electric Power University (NCEPU), Baoding. From 2023 to 2025, he was a Postdoctoral Research Fellow with the Department of Electronic Engineering, Tsinghua University, Beijing, China. He is currently a Distinguished Associate Professor with the School of Artificial Intelligence, University of Science and Technology Beijing (USTB), Beijing, China. His research interests include Stacked Intelligent Metasurface, Environment-Aware Communications, and Channel Knowledge Map.
\end{IEEEbiography}

\begin{IEEEbiography}[{\includegraphics[width=1in,height=1.25in,clip,keepaspectratio]
{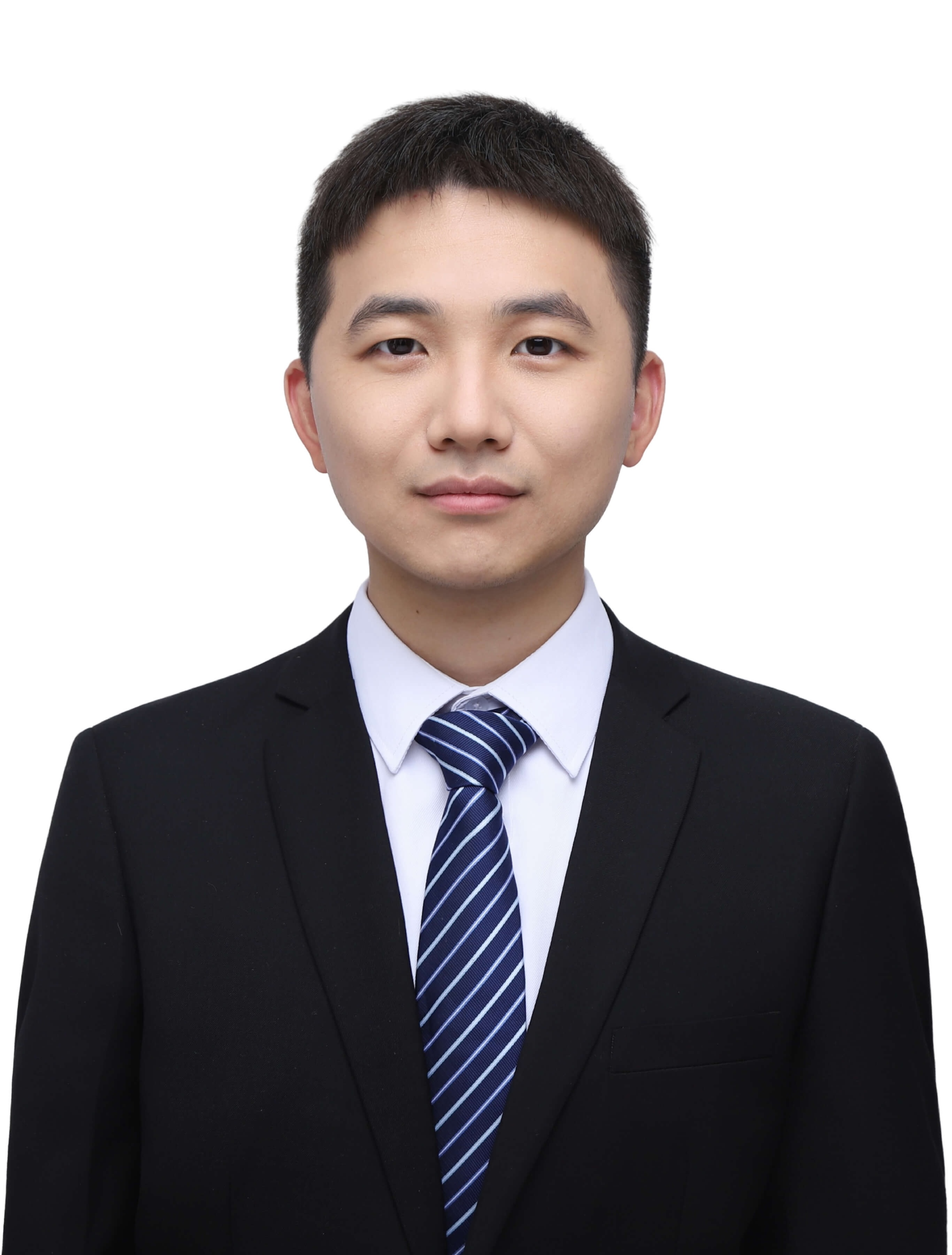}}]{Xuehan Wang} (Member, IEEE) received the B.Eng. and Ph.D. degrees both from the Department of Electronic Engineering, Tsinghua University, Beijing, China, in 2021 and 2026, respectively. He is currently with the China Mobile Research Institute. His research interests include wireless networks, delay-Doppler domain signal processing, mobile communication and underwater acoustic signal processing.
\end{IEEEbiography}

\begin{IEEEbiography}[{\includegraphics[width=1in,height=1.25in,clip,keepaspectratio]{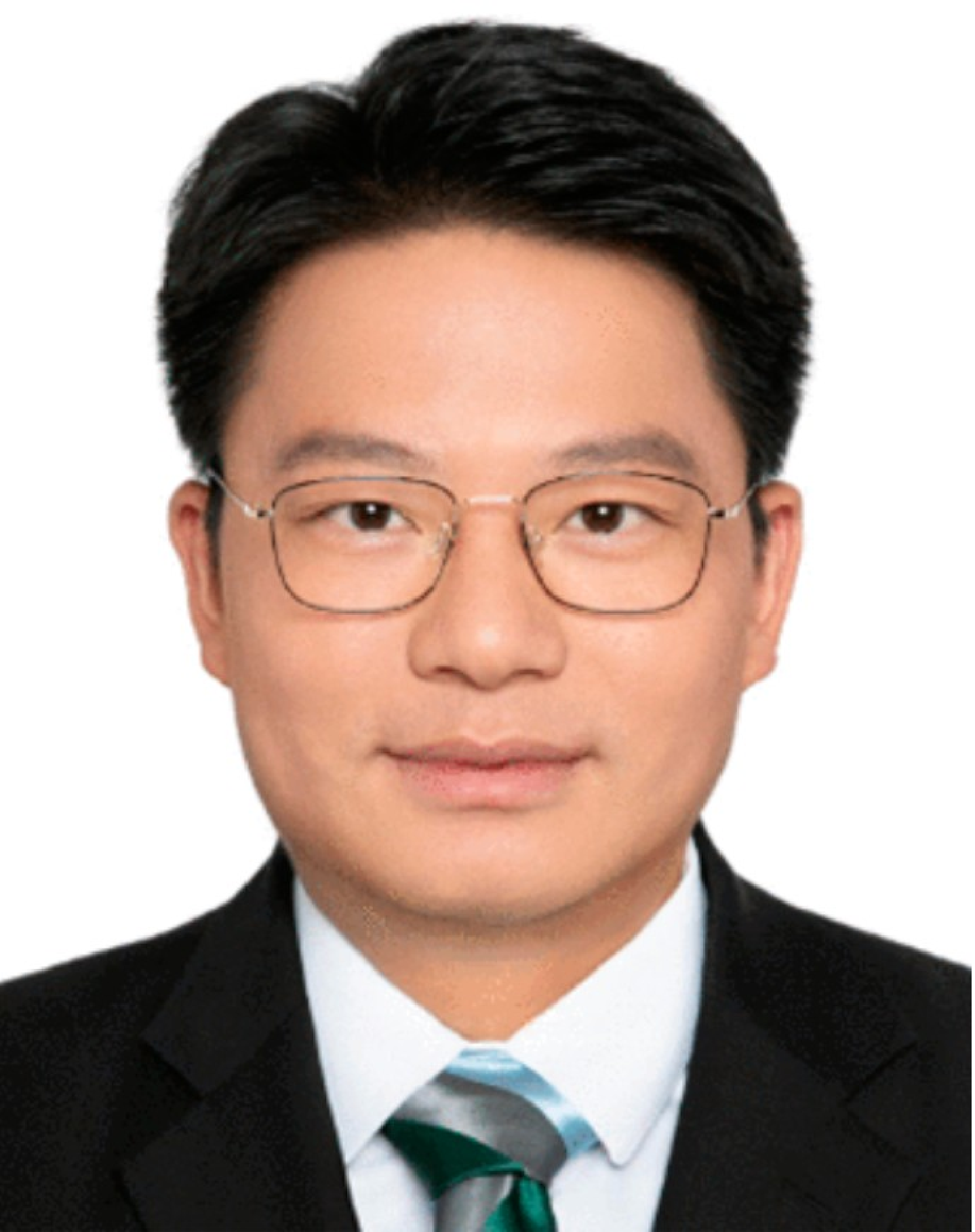}}]{Mugen Peng} received the Ph.D. degree in communication and information systems from the Beijing University of Posts and Telecommunications (BUPT), Beijing, China, in 2005. Afterward, he joined BUPT, where he has been the Dean of the School of Information and Communication Engineering since June 2020 and the Deputy Director of the State Key Laboratory of Networking and Switching Technology since October 2018. In 2014, he was also an Academic Visiting Fellow with Princeton University, USA. He has authored or coauthored over 150 refereed IEEE journal articles and over 250 conference proceeding papers. His main research areas include wireless communication theory, radio signal processing, cooperative communication, cloud communication, and the Internet of Things. He was a recipient of the 2018 Heinrich Hertz Prize Paper Award, the 2014 IEEE ComSoc AP Outstanding Young Researcher Award, and the Best Paper Award in the ICC 2022, ICCC 2020, IEEE WCNC 2015, and JCN 2016. He is currently on the Editorial/Associate Editorial Board Member of the IEEE Network, the IEEE Communications Magazine, the IEEE Internet of Things Journal, the IEEE Transactions on Vehicular Technology, the IEEE Transactions on Network Science and Engineering, the Intelligent and Converged Networks, and the Digital Communications and Networks (DCN).
\end{IEEEbiography}

\end{document}